\newcommand\vldbdoi{XX.XX/XXX.XX}
\newcommand\vldbpages{XXX-XXX}
\newcommand\vldbvolume{14}
\newcommand\vldbissue{1}
\newcommand\vldbyear{2020}
\newcommand\vldbauthors{\authors}
\newcommand\vldbtitle{\shorttitle} 
\newcommand\vldbavailabilityurl{https://anonymous.4open.science/r/GLORAN-56FF}
\renewcommand\footnotetextcopyrightpermission[1]{}
\newcommand\vldbpagestyle{plain}
\useunder{\uline}{\ul}{}
\newtheorem{problem}{Problem}
\newcommand\ourmethod{GLORAN}
\newcommand\decom{\it Decomp}
\newcommand\scan{\it Scan\&D}
\newcommand\lookup{\it Lookup\&D}
\newcommand\rocksdb{\it RocksDB}
\begin{document}
\title{Don't Forget Range Delete! Enhancing LSM-based Key-Value Stores with More Compatible Lookups and Deletes}

\author{Fan Wang}
\email{fan008@e.ntu.edu.sg}
\affiliation{%
  \institution{Nanyang Technological University}
  \country{Singapore}
}

\author{Dingheng Mo}
\email{dingheng001@e.ntu.edu.sg}
\affiliation{%
  \institution{Nanyang Technological University}
  \country{Singapore}
}

\author{Siqiang Luo}
\email{siqiang.luo@ntu.edu.sg}
\affiliation{%
  \institution{Nanyang Technological University}
  \country{Singapore}
}

\begin{abstract}
LSM-trees are featured by {\it out-of-place} updates, where key deletion is handled by inserting a tombstone to mark its staleness instead of removing it in place. This defers actual removal to compactions with greatly reduced overhead. However, this classic strategy struggles with another fundamental operator--range deletes--which removes all keys within a specified range, requiring the system to insert numerous tombstones and causing severe performance issues.

To address this, modern LSM-based systems introduce range tombstones that record the start and end keys to avoid per-key tombstones. Although this strategy achieves impressive range delete efficiency, we show that such de facto industrial range delete solution is {\it incompatible} with lookup operators. 
In particular, our experiments show that point lookup latency can increase by 30\% even with just 1\% range deletions in the workload. Further to our surprise is that this issue has not been raised before, although the range tombstone solution has been employed for more than five years.

To address this critical system performance issue, we propose {\ourmethod}, an efficient range delete method that can be integrated into modern LSM-based systems and offers desirable range deletion performance without compromising point lookup efficiency. It introduces a global index to efficiently manage deleted ranges. This index allows point lookups to quickly locate the relevant deleted ranges without retrieving many irrelevant elements, reducing the I/O complexity from $O(\frac{N}{\lambda})$ to either $O( \log^2 \frac{N}{\lambda F})$ or $O(\varphi \cdot \log{\frac{N}{F}})$, where $1/\lambda$ is the ratio of range deletes, and $\varphi$ is the FPR of Bloom filters in LSM-trees. 
Furthermore, we design an entry validity estimator to further enhance expected I/O cost to {$O(\varepsilon \cdot \log^2{\frac{N}{\lambda F}})$} for looking up existing keys. 
Extensive evaluations demonstrate that {\ourmethod} consistently outperforms baseline approaches, while achieving up to $10.6\times$ faster point lookups and 
$2.7\times$ higher overall throughput compared to the state-of-the-art method.

\end{abstract}

\maketitle

\pagestyle{\vldbpagestyle}
\begingroup\small\noindent\raggedright\textbf{PVLDB Reference Format:}\\
\vldbauthors. \vldbtitle. PVLDB, \vldbvolume(\vldbissue): \vldbpages, \vldbyear.\\
\href{https://doi.org/\vldbdoi}{doi:\vldbdoi}
\endgroup
\begingroup
\renewcommand\thefootnote{}\footnote{\noindent
This work is licensed under the Creative Commons BY-NC-ND 4.0 International License. Copyright is held by the owner/author(s). Publication rights licensed to the VLDB Endowment. \\
\raggedright Proceedings of the VLDB Endowment, Vol. \vldbvolume, No. \vldbissue\ %
ISSN 2150-8097. \\
\href{https://doi.org/\vldbdoi}{doi:\vldbdoi} \\
}\addtocounter{footnote}{-1}\endgroup

\ifdefempty{\vldbavailabilityurl}{}{
\vspace{.3cm}
\begingroup\small\noindent\raggedright\textbf{PVLDB Artifact Availability:}\\
The source code, data, and/or other artifacts have been made available at \url{\vldbavailabilityurl}.
\endgroup
}

\section{Introduction}
The Log-Structured Merge-tree (LSM-tree) has been widely employed as the backbone of many renowned key-value stores such as RocksDB~\cite{rocksdb}, Cassandra~\cite{lakshman2010cassandra}, AsterixDB~\cite{alsubaiee2014asterixdb}, ScyllaDB~\cite{scylladb}, CockroachDB~\cite{cockroachdb}, and InfluxDB~\cite{influxdb}. It organizes data as key–value pairs, or {\it entries}, across multiple levels with exponentially increasing capacities. The smallest level, $L_0$, resides in memory, while all other levels are stored on disk, with entries sorted by key. New entries are first inserted into $L_0$ and progressively migrated to larger levels through compaction processes. The LSM-tree supports efficient key deletion through an {\it out-of-place} approach. Instead of finding the key and removing it in place, it inserts a special entry, a {\it tombstone}, to conceptually mark the key as invalid. The obsolete entries remain until purged by the tombstone during subsequent compactions as Figure ~\ref{fig:point_delete} shows. This efficiently handles single key deletions, while, still falling short in supporting range deletes.

\begin{figure}[t]
\centering
  \setlength{\abovecaptionskip}{0 cm}
  \includegraphics[width=0.99\linewidth]
  {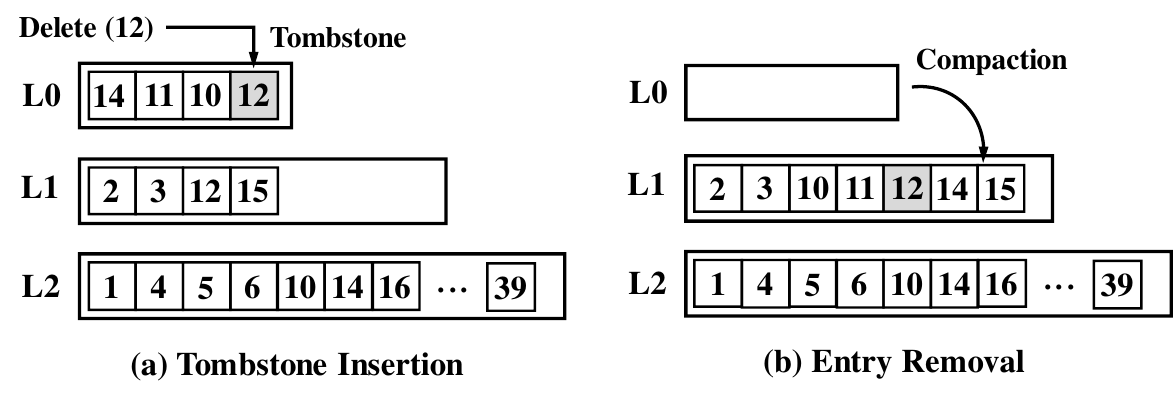} 
  \caption{The LSM-tree employs out-of-place deletion with tombstones, and removes the entry during compaction.}
\label{fig:point_delete}
\vspace{-4mm}
\end{figure}

A range delete removes all keys within a certain range from the LSM-tree. This operation is naturally utilized in modern applications and has been incorporated by many renowned LSM-based key-value stores like RocksDB and Cassandra. For example, in e-commerce platforms, time-limited promotional activities may assign a special prefix to product codes, grouping all promotional items within a specific key range. Once the activity ends, removing the associated data requires efficient range delete. Similar needs arise in other applications, such as purging time-bound data in time series databases~\cite{influxdelete} or discarding outdated dataset versions in machine learning pipelines. 

\vspace{1mm}
\noindent
{\bf Surprising Impact of Range Delete.} While LSM-tree optimizations have been extensively studied, the impact of range delete remains underexplored. While one may not expect a heavy portion of range deletes in key-value store workloads, to our surprise, our benchmark shows that even a tiny range delete portion (e.g., 1\%) can significantly degrade the performance of {\it other operations} (e.g., {\it point lookup} that retrieves the entry for a specific key). This calls for rethinking the system design regarding range deletes.
Naively, implementing range delete in an LSM-tree requires inserting a tombstone for each key within the target range. This incurs substantial insertion overhead and additional costs for the subsequent compactions, especially for long ranges. To mitigate this, the state-of-the-art method introduces {\it range tombstone}. It represents an entire key range with a single entry by encoding both the start and end keys. With this design, only one entry is inserted per range delete, thus significantly reducing the I/O cost. As a result, this method is widely adopted in modern LSM-based key-value stores such as RocksDB and CockroachDB. While this approach delivers outstanding performance for range deletes, it may significantly drag down the point lookup performance. 
In our experiments, {\it in a workload without range deletes, replacing just 1\% of operations with range deletes increased point lookup latency by more than 30\% in RocksDB. The point lookup performance can even drop by 2.3 times if the range delete workload occupies 5\%.} Despite such significant impact, the issue has not been explicitly studied. In this paper, we conduct a thorough analysis of this issue and propose a novel approach that effectively improves point lookup efficiency while delivering desirable range delete performance.

The state-of-the-art design introduces a dedicated block at each LSM-tree level to accommodate the range tombstones. This local storage layout leads to substantial and often unnecessary overhead for point lookups. In order to find a specific key, the system searches the LSM-tree levels from top to bottom. At each level, the corresponding range tombstone block must also be checked to determine whether the key has been deleted. Many levels, however, do not contain the target key, leading to unnecessary accesses of the range tombstone blocks. Furthermore, within a certain block, point lookups would retrieve numerous unrelated range tombstones, incurring high I/O costs. Although range tombstones are sorted by their start key, their variable lengths prevent the system from knowing the exact key range coverage before retrieving them. As a result, every tombstone with a start key smaller than the target key must be examined, which greatly increases lookup cost.

To address these inefficiencies effectively, we propose managing the deleted ranges within a global index. 
To enhance clarity, we refer to the deleted range issued by a range delete operation as its range record. The global index captures the coverage information of range records and organizes them accordingly. Unlike the state-of-the-art method that indexes ranges solely by their start key, our approach enables efficient locating of the relevant ranges for a target key while skipping large amounts of irrelevant ones.
Moreover, the global index can be bypassed when the target key is absent from the LSM-tree, thereby reducing the overhead.
Hence, this design leads to obviously improved point lookup efficiency while retaining desirable range delete support.

Constructing such a global index is by no means straightforward. If range tombstones are stored naively, their temporal information is lost, which may cause incorrect deletion of keys that arrive after the range deletion. To solve this problem, we introduce a two-dimensional representation for range records. Each record explicitly encodes both the key boundaries and the time boundaries of a deletion. 
We then present an efficient spatial structure, LSM-DRtree, as the global structure for managing range records. By properly exploiting the intrinsic spatial relationships among range records, this structure effectively eliminates overlaps within the index, thus achieving impressive point lookup performance.
Our theoretical analysis shows that, compared with the state-of-the-art design, the point lookup cost can be reduced from $O(\frac{N}{\lambda})$ to $O( \log^2 \frac{N}{\lambda F})$, where $\frac{N}{\lambda}$ is the number of range deletes.
At the same time, the LSM-style structure ensures its update efficiency that satisfies the performance of range deletes.
It also enables the index to operate in minimal memory overhead, with the buffer size limited to 4 MB in our experiments (and one can foresee performance upgrade when more buffer memory is allowed).
Beyond the global index, we enhance the design with an entry validity estimator to further speed up point lookups.
We summarize our contributions as follows.

\begin{itemize}
    \item 
    To our knowledge, the impact of range delete has not been comprehensively reviewed in the literature. We provide a first comprehensive quantitative analysis of the range delete overhead and its impact on other operations under various existing methods.
    
    \item 
    We propose {\ourmethod}, a novel range delete strategy that organizes range records globally, achieving both efficient range delete and high-performance lookups.

    \item 
    We design an efficient global index structure, LSM-DRtree, which combines fast range deletions with low-latency point lookups. Together with a lightweight entry validity estimator, it reaches further enhanced lookup performance.

    \item 
    We conduct extensive evaluations under diverse workloads, where {\ourmethod} consistently delivers competitive performance, achieving up to $2.7\times$ higher throughput than state-of-the-art methods.
    

\end{itemize}

\begin{table}[t]
\begin{tabular}{l|l|l}
\hline
Term & Definition                & Unit \\ \hline
$F$    & Capacity of memory buffer of an LSM-tree   &entries   \\
$B$    & Size of data blocks                    &bytes   \\
$k$    & Size of key in an LSM-tree                &bytes   \\
$e$    & Size of entry in an LSM-tree              &bytes   \\
$\varphi$    & False positive rate of Bloom filter              &   \\
$\epsilon$    & False positive rate of extendable validity filter              &   \\
$N$    & Number of entries in an LSM-tree          &        \\
$Q$    & Number of range deletes                &        \\
$\ell$    & Average length of deleted range        &        \\
$L$    & Number of levels in an LSM-tree         &      \\
$T$    & Size ratio of an LSM-tree         &      \\
\hline
\end{tabular}
\vspace{-1mm}
\caption{List of terms used throughout the paper. 
}
\vspace{-10mm}
\label{tab:terms}
\end{table}

\vspace{-2mm}
\section{Background}
\label{sec:bg}
This section introduces the basic operations of LSM-trees and analyzes their associated costs under workloads without range deletes. This prepares readers for the next section, where we discuss how the range deletes impact the workflow and cost of these operations.
Following prior works~\cite{dostoevsky2018,huynh2021endure,liu2024structural}, the operational costs are evaluated as the number of I/Os, with terms used throughout the paper summarized in Table~\ref{tab:terms} for ease of reference.

\vspace{1mm}
\noindent {\bf Update the LSM-tree.}
The LSM-tree consists of multiple levels, among which $L_0$ resides in the main memory serving as a write buffer with the capability of accommodating $F$ entries. For the on-disk levels, their capacity is $T$ times larger than their upper level, where $T$ is the size ratio. In other words, the capacity of the $i$-th level is $F \cdot T^i$. 
Therefore, to store $N$ entries, $L = \log_T(\tfrac{N}{F})$ levels should be included.
To update the LSM-tree, a new entry is inserted into the write buffer that can eventually be compacted to the bottommost level. Therefore, an entry can be rewritten up to $O(T \cdot L)$ times. The entries are processed by sequential read/write that handles data in the granularity of disk data blocks with size of $B$. Hence, the update cost is $O(T \cdot L \cdot \tfrac{e}{B})$ I/Os, where $e$ is the size of an entry. 
Here, we adopt a typical LSM design, leveling configuration, in our analysis for clarity and readability. We also provide associated analysis on other configurations in our technical report~\cite{technicalreport}.

\vspace{1mm}
\noindent {\bf Point Lookup.}
This operation retrieves the entry with a target key from the LSM-tree through a level-by-level search. To accelerate this process, modern LSM-trees introduce a fence pointer for each data block that records the associated key range. Since the entries are sorted in the disk levels, this facilitates identifying the data block that may contain the target key in each level and retrieving it with a single I/O. In addition, the Bloom filters further reduce I/O cost by predicting the presence of a matched key in a certain level without false negative results. Hence, we can safely skip the levels when the Bloom filter returns false to save I/O. We denote the false positive rate of a Bloom filter as $\varphi$. Then, if there is no matched key in the LSM-tree, the operation should look through all levels with a cost of $O(\varphi \cdot L)$. If the matched key does exist, it requires an additional I/O to retrieve the entry, counting $O(1 + \varphi \cdot L)=O(\lceil \varphi \cdot L \rceil)$.

\vspace{1mm}
\noindent {\bf Point Delete.}
This operation removes a specific key from the LSM-tree using an out-of-place strategy. Instead of directly locating and immediately deleting obsolete entries, the system inserts a {\it tombstone} into the memory buffer to mark the target key as invalid. A tombstone is a special entry that carries only the key without an associated value. During flush or compaction, it is processed like a normal entry and used to discard obsolete entries containing the same key. Moreover, since a key can have multiple outdated versions across levels, the tombstone must be retained until its expiration when reaching the bottommost level, which incurs an I/O cost of $O(T \cdot L \cdot \tfrac{k}{B})$. Besides, point lookups can stop at tombstones as they confirm the non-existence of target keys.

\begin{table}[t]
\begin{tabular}{l l l l l l}
\hline
    {\bf Operation} & {\bf LRR (SOTA)} & {\bf {\ourmethod} (ours)} \\
\hline
RangeDelete
& $O\! \left(\tfrac{k}{B} \cdot T \cdot \log{\tfrac{N}{F}}\right)$ 
& $O\! \left(\tfrac{k}{B} \cdot T \cdot \log\left( \tfrac{1}{\lambda} \cdot \tfrac{N}{F} \right)\right)$  \\

Lookup (V)  
&  $O\! \left(\tfrac{N k}{\lambda B} \! + \! \lceil \varphi \rceil \log \! \tfrac{N}{F} \! + \! 1 \right)$
&  $O\!\left(\varepsilon  \log^2 \tfrac{N}{\lambda F} \! + \! \lceil \varphi \log \! \tfrac{N}{F} \rceil \right)$ \\

Lookup (N)  
& $O\! \left(\tfrac{N k}{\lambda B} \! + \! \lceil \varphi \rceil \log \! \tfrac{N}{F} \right)$
& $O\! \left(\varphi \log{\tfrac{N}{F}} \right)$  \\

Lookup (O)  
& $O\! \left(\tfrac{N k}{\lambda B} \! + \! \lceil \varphi \rceil \log \! \tfrac{N}{F} \right)$  
& $O\! \left(\log^2{\tfrac{N}{\lambda F}} \! + \! \lceil \varphi \log{\tfrac{N}{F}} \rceil \right)$  \\
\hline
\end{tabular}
\caption{Operational cost of different range delete methods, where V, N, and O represent lookups on keys that are valid, non-existent in the LSM-tree, and obsoleted by range deletes while not removed from LSM-tree. In this table, all logarithmic terms are taken with base T. {\ourmethod} delivers better lookup performance than the SOTA method across diverse cases that can reduce the cost from $O\left(\tfrac{N}{\lambda}\right)$ to $O\left(\log^2 \left(\tfrac{N}{\lambda F}\right)\right)$. It also enables enhanced range delete, which matches the performance of LLR with only $F/16$ memory assigned to the global index in our experiments.}
\vspace{-7mm}
\label{tab:costs}
\end{table}
\section{The impact of Range Deletes}
\label{sec:range delete impact}
While the previous analyses give hints on the performance of point-lookup/delete and update operations, these analyses become inaccurate with the consideration of range deletes. 
Hence, this section further presents an analysis of the associated costs {\it in the presence of range deletes}, which have not been formally studied before. 
To this end, we detail the workflow of relevant operations and quantitatively evaluates their cost as summarized in Table~\ref{tab:costs}. These findings not only offer a deeper understanding of the range delete operation but also inspire the design of our global range delete method.

Modern key–value stores introduce a specialized entry, range tombstone, to represent the range records. For each deleted range, the range tombstone records the start key as the entry key, while storing the end key in its value. This design allows a compact representation of range records, which remains consistent across various range delete lengths. For instance, as the key size is $k$, a range tombstone only takes up $2k$ in size. 
To store these range tombstones, a dedicated block is allocated in each LSM level. Within these blocks, the range tombstones are sorted by their start keys to improve query efficiency. Since these tombstones are locally stored at individual levels, we refer to this approach as the local range record (LRR) method.

\vspace{1mm}
\noindent {\bf Range Delete Operation.}
The range delete can be performed by simply inserting a range tombstone into the range tombstone block of the memory buffer. These blocks would be consulted during flushes or compactions to remove obsolete entries. Similar to point deletions, range tombstones must be propagated to the bottommost level to ensure that all obsolete entries are eventually cleared. To achieve this, range tombstone blocks are also merged during flushes and compactions, with the results written into the target level. This process requires reading and rewriting the range tombstones, incurring an I/O overhead of $O(\frac{k}{B})$ per entity. As analyzed in Section~\ref{sec:bg}, a range tombstone typically undergoes $O(T \cdot L)$ compactions before expiration, leading to a total range delete cost of $O(T \cdot L \cdot \frac{k}{B})$. 
This result highlights the notable efficiency of the LRR approach, as its cost is even lower than that of a single update operation. 

\vspace{1mm}
\noindent {\bf The Impact on Point Lookups.}
Since range tombstone blocks are locally attached to each LSM-tree level, point lookups still need to search the tree from the memory buffer down to the bottommost level. Whereas, at each level, in addition to checking the data blocks for the target entry, the corresponding range tombstone block must also be probed to determine whether the key has been deleted. Then, the lookup can terminate if the key is found in the data blocks and not invalidated by any tombstones, or if it is determined to be deleted by a range tombstone. Therefore, apart from the lookup cost analyzed in Section~\ref{sec:bg}, probing range tombstone blocks brings extra overhead, which is evaluated in detail in the following part.


Within a certain range tombstone block, a target key $v$ can be covered by any range tombstone whose key is smaller than $v$ due to the various range lengths. For instance, key $500$ can be deleted by either range delete $[1,1000)$ or $[490,510)$. To estimate the number of such tombstones to check, let $\frac{N_i}{\lambda}$ denote the number of range tombstones in the $i$-th level, where $N_i$ and $1/\lambda$ represent the number of data entries and the ratio of range deletes, respectively. The key of the range tombstones is uniformly distributed over the key universe $[0,U)$. In this case, the number of candidates $q_i$ follows a binomial distribution
with the expected value of $\frac{N_i}{2\lambda}$. In order to retrieve them, one I/O is required to load the first page of the tombstone block, followed by sequential reads since tombstones are sorted by start key. Thus, the probing cost is $O(1 + \frac{N_i}{\lambda} \cdot \frac{k}{B})$, together with $\varphi$ I/Os to search the data blocks. If $v$ does not exist or is invalidated by a tombstone, all $L$ levels must be checked with cost presented in Equation ~\ref{eq:lrr_pointlookup_no}. If a valid $v$ exists, one more I/O is needed to retrieve it.

\vspace{-3mm}
{\small
\begin{align}
Z &= O\left( \sum_{i=1}^{L}\left(\frac{N_i}{\lambda} \cdot \frac{k}{B} + \varphi + 1\right)\right)
= O\left(\frac{N}{\lambda} \cdot \frac{k}{B} + L \cdot \varphi + L \right)
\label{eq:lrr_pointlookup_no}
\end{align}
}


\vspace{-1mm}
\noindent {\bf Discussion.}
As indicated in Equation~\ref{eq:lrr_pointlookup_no}, the range tombstone blocks incur at least $L$ additional I/Os during point lookups. This overhead is considerable compared with the cost of data block access, since the false positive rate of Bloom filter $\varphi$ is usually much smaller than 1. The key reason is that range tombstone blocks are stored locally, hence requiring at least one I/O at every level to retrieve them. Consequently, the overhead persists even in workloads with only a few range deletes. In addition, the lack of an efficient index forces point lookups to examine $O(\frac{N}{\lambda})$ tombstones. Within a specific level, all range tombstones whose start key is smaller than the target are retrieved. Whereas many of them are irrelevant to the target key. For instance, when searching the key $100$, the range tombstone $(5, 25)$ should be accessed, though it does not overlap with $100$, which leads to unnecessary I/Os.
These problems can be effectively mitigated by introducing an efficient global index for range records, which decouples the probing cost from the number of LSM levels. Furthermore, organizing records by their ranges allows the system to skip many irrelevant ones, thus achieving lower I/O cost. 
Based on these analyses, we propose a novel method that can significantly reduce point lookup cost 
while maintaining efficient range deletes.

\begin{figure}[t]
\vspace{-6mm}
\centering
  \setlength{\abovecaptionskip}{0 cm}
  \includegraphics[width=0.99\linewidth]
  {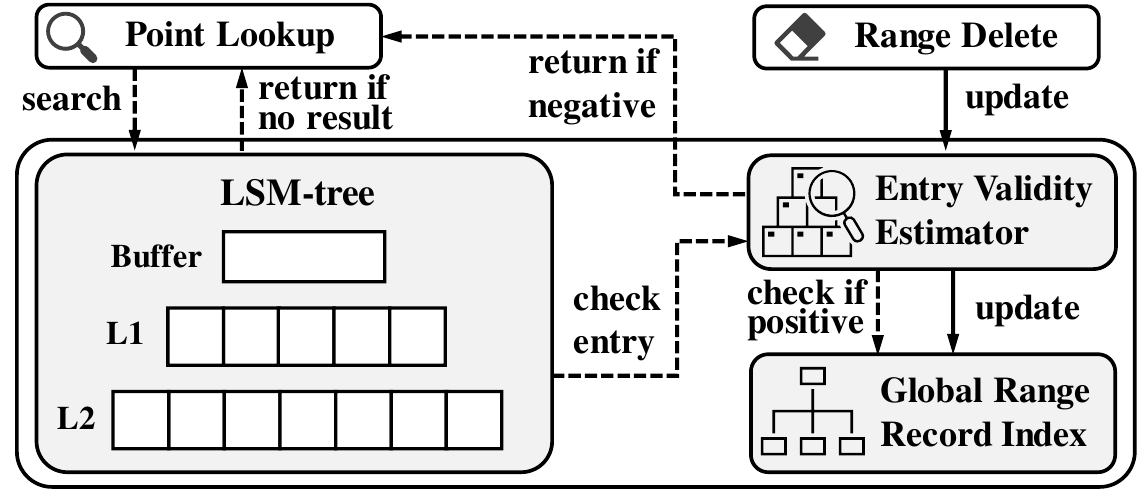} 
  \caption{An illustration of the GLORAN structure.}
\label{fig:gloran structure}
\vspace{-5mm}
\end{figure}

\section{\ourmethod}
\label{sec:global index}
We introduce a global range delete method, {\ourmethod}, as illustrated in Figure~\ref{fig:gloran structure}. It incorporates a global index to organize the range records instead of storing them in LSM levels. Under this new configuration, the conventional range tombstone is no longer suitable. To tackle this, we design a novel representation approach to properly profile the range records (Section~\ref{subsec:effective_area}). Besides, the associated index structure, LSM-DRtree, is proposed to manage them efficiently. This structure delivers impressive query performance thus providing strong efficiency guarantees for point lookups while preserving desirable range delete performance (Section~\ref{subsec:lsm-rtree}). Moreover, {\ourmethod} employs a lightweight entry validity estimator that offers a shortcut in point lookup workflow, which further reduces expected lookup overhead (Section~\ref{subsec:filter}).


\vspace{-2mm}
\subsection{Effective Area of Range Records}
\label{subsec:effective_area}
The global range delete method necessitates effective approach to represent the range records for efficient management. This, however, is nontrivial. Due to the out-of-place deletion strategy of LSM-tree, simply recording the key range of a range delete is insufficient because each record implicitly carries temporal information. For example, suppose a range delete inserts a range $[5,15)$ in the global index, followed by an update on key 8. Then, the global index would still mark this key as invalid, thus leading to incorrect point lookup results and mistaken removal of valid entries during compaction. 
Consequently, an effective representation of range records must capture both the key range and the temporal information. This can be naturally achieved when range records are stored locally in each LSM level within range tombstone blocks. As a special entry type, range tombstones are automatically assigned sequence numbers and managed by the multi-version control mechanism of the LSM-tree. However, in the global index, range records are stored outside the LSM-tree, making this mechanism unachievable. Therefore, a novel representation is required to explicitly encode both the key range and the sequence number range of each record.

\begin{figure}[t]
\vspace{-6mm}
\centering
  \setlength{\abovecaptionskip}{0 cm}
  \includegraphics[width=0.99\linewidth]
  {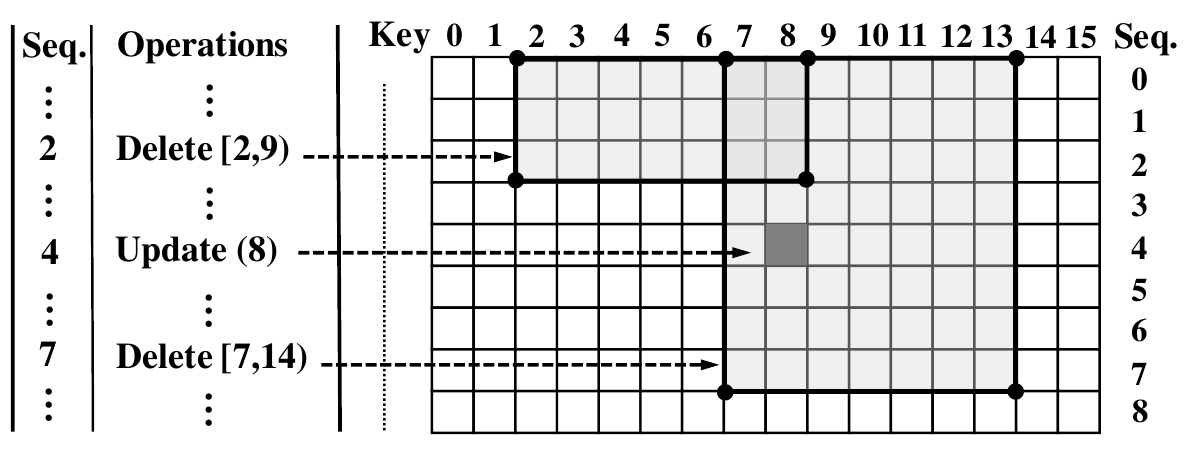} 
  \caption{The rectangular effective area of a range delete within working space invalidates all entries within it.}
\label{fig:effective area}
\vspace{-5mm}
\end{figure}

To this end, we introduce a two-dimensional working space spanning the key domain and the sequence number domain. 
As shown in Figure~\ref{fig:effective area}, a range delete on keys $[7, 14)$ is issued with sequence number 8. This operation invalidates all entries whose keys fall within the specified range and sequence numbers are smaller than 8 in the LSM-tree. These conditions naturally enclose a rectangle in the working space, which we define as its {\bf effective area}. Then, any entry mapped inside this area should be obsoleted by this operation, such as the entry with key 8 and sequence number 5 in Figure~\ref{fig:effective area}. Based on this abstraction, the validity of entries can be determined according to Lemma~\ref{lemma: effective area}.

\begin{lemma}
For a sequence of range deletes with effective areas $\mathcal{A} = \{\alpha_1, \alpha_2, \cdots , \alpha_n \}$, any entry invalidated by a range delete must be covered by a ${\alpha_i \in \mathcal{A}}$ in the working space.
\label{lemma: effective area}
\end{lemma}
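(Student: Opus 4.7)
The plan is to derive the lemma directly from the definitions introduced just above the statement: an entry $E$ in the LSM-tree corresponds to a point $(k_E, s_E)$ in the two-dimensional working space (key coordinate and sequence-number coordinate), and each range delete $R_i$ with key interval $[u_i, v_i)$ issued at sequence number $t_i$ has effective area $\alpha_i = [u_i, v_i) \times [0, t_i)$. The claim is then essentially a semantic unfolding rather than a substantive combinatorial argument.

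First, I would fix an arbitrary entry $E$ that is invalidated by some range delete. By the out-of-place deletion semantics that motivates the definition of effective area, invalidation by $R_i$ means two things must hold simultaneously: (i) $k_E \in [u_i, v_i)$, so the key of $E$ lies inside the deleted range; and (ii) $s_E < t_i$, so $E$ was written strictly before $R_i$ was issued. Condition (ii) is exactly what protects later inserts to the same key from being incorrectly removed, as illustrated by the ``key~$8$ inserted after $[5,15)$'' example preceding the lemma.

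Second, I would translate these two conditions into the geometric statement. Condition (i) places the $k$-coordinate of $(k_E, s_E)$ in the horizontal extent of $\alpha_i$, while condition (ii) places its $s$-coordinate in the vertical extent $[0, t_i)$. Together they mean that the point $(k_E, s_E)$ lies in the rectangle $\alpha_i$, which is by definition what it means for $E$ to be \emph{covered} by $\alpha_i \in \mathcal{A}$ in the working space. Since $E$ was arbitrary, the lemma follows.

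I do not expect a genuine obstacle here, but two bookkeeping points deserve care. The first is ensuring the sequence-number axis is globally consistent, so that the inequality $s_E < t_i$ is well-defined for entries and range deletes alike; the paper already grants this through the LSM-tree's multi-version control mechanism. The second is boundary conventions: the half-open intervals $[u_i, v_i)$ and $[0, t_i)$ used to define $\alpha_i$ must match the conventions of ``key in the deleted range'' and ``older sequence number'' in the invalidation rule, so that the equivalence between invalidation and rectangle containment is tight rather than off by one endpoint. Once these are pinned down, the proof is a one-line appeal to the definition of $\alpha_i$.
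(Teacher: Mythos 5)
Your proof is correct and follows essentially the same route as the paper, which states the lemma without a formal proof and treats it as an immediate unfolding of the definition of effective area (key in the deleted range, sequence number below the delete's sequence number, hence the point lies in the rectangle). The only detail to note is that the paper allows the lower sequence bound of an effective area to be nonzero (the figure's zero is "for simplicity"), but this does not affect the argument since the paper stipulates that no matching entries issued before that starting sequence number remain in the LSM-tree.
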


Accordingly, the global index represents each range record by its effective area, a rectangle defined by the two vertices. For example, in Figure~\ref{fig:effective area}, a range delete on keys $[7, 14)$ with sequence number 8 is represented by two vertices: $(7, 0)$, denoting the smallest key and sequence number, and $(14, 8)$, their largest counterparts. Note that the smallest sequence of a range record specifies the minimum boundary before which the record expires, which differs across range records. Figure~\ref{fig:effective area} shows it as zero for simplicity. 
In general, the effective area can precisely capture the boundaries of each range record to avoid unnecessary lookups for irrelevant entries. Observant readers may notice that an appropriate index over these rectangles is also crucial for the system performance. We thus next introduce the design of our global index structure.


\begin{figure}[t]
\vspace{-6mm}
\centering
  \setlength{\abovecaptionskip}{0 cm}
  \includegraphics[width=0.99\linewidth]
  {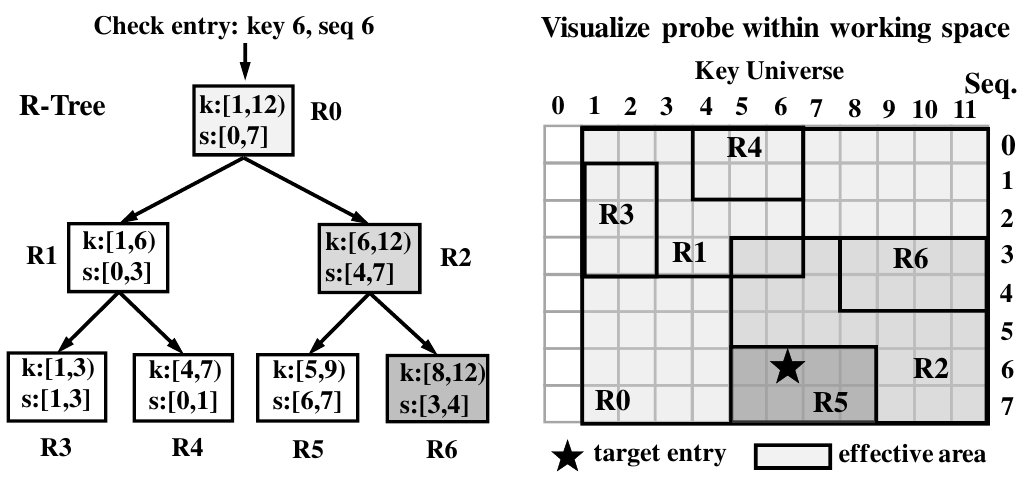} 
  \caption{The R-tree structure is able to locate an entry efficiently within the working space during point lookups.}
\label{fig:lsm rtree}
\vspace{-4mm}
\end{figure}

\subsection{Global Range Record Index: LSM-DRtree}
\label{subsec:lsm-rtree}
The global index is utilized to store range records and determine whether a key has been invalidated by them, which is key for achieving desirable range deletes and point lookups. Considering the rectangular effective areas of the range records, the index must efficiently store a set of rectangles and determine whether a given point is covered by any of them.

In this context, the performant spatial R-tree serves as a natural and attractive choice.
As illustrated in Figure~\ref{fig:lsm rtree}, an R-tree organizes range records in a hierarchical manner: leaf nodes (R3–R6) store the actual range records, while internal nodes (R1, R2) maintain the minimum bounding rectangles (MBRs) that enclose all their child nodes. Higher-level internal nodes recursively index lower ones until a single root node remains. To check whether an entry is covered by any range record, the R-tree is probed from the root downward. For instance, in Figure~\ref{fig:lsm rtree}, checking an entry with key 6 and sequence number 6 sequentially visits nodes R0, R2, and R6. The effective areas of these nodes are also presented for clarity. During this descent, the search region narrows progressively, thus effectively skipping irrelevant elements. When inserting a new range record, the R-tree attaches it to the internal node whose MBR incurs the smallest expansion, which naturally clusters related records and minimizes MBR overlap, thus improving query efficiency.

Despite its practical effectiveness, the R-tree suffers from theoretical inefficiency due to its lack of sound worst-case query complexity guarantees. When range records are highly skewed in the key universe, their effective areas become densely clustered, resulting in significant MBR overlap among internal nodes. Consequently, certain queries may need to access substantial, even nearly all, internal nodes in each level with excessive overhead. 
For example, if the queried entry in Figure~\ref{fig:lsm rtree} has a sequence number of 3, R1 cannot be skipped even though none of its range records actually cover the entry. This property results in undesirable tail latency and unstable performance.
Moreover, the R-tree incurs heavy update overheads. As the number of range records grows, a node may exceed its capacity, which necessitates a split and adjustments across the parent nodes. These operations are complex and costly. In particular, the global index must be serialized on disk to accommodate large volumes of range records. Hence, direct updates to on-disk R-tree nodes lead to substantial I/O overhead that significantly degrades range delete performance of the system.


To address these issues, LSM-DRtree is proposed to enhance the theoretical query complexity of R-tree using the disjoint property of effective areas, as well as improve update performance with an LSM-style structure, as detailed in the following.


\begin{figure}[t]
\vspace{-6mm}
\centering
  \setlength{\abovecaptionskip}{0 cm}
  \hspace{-4mm}
  \includegraphics[width=0.99\linewidth]
  {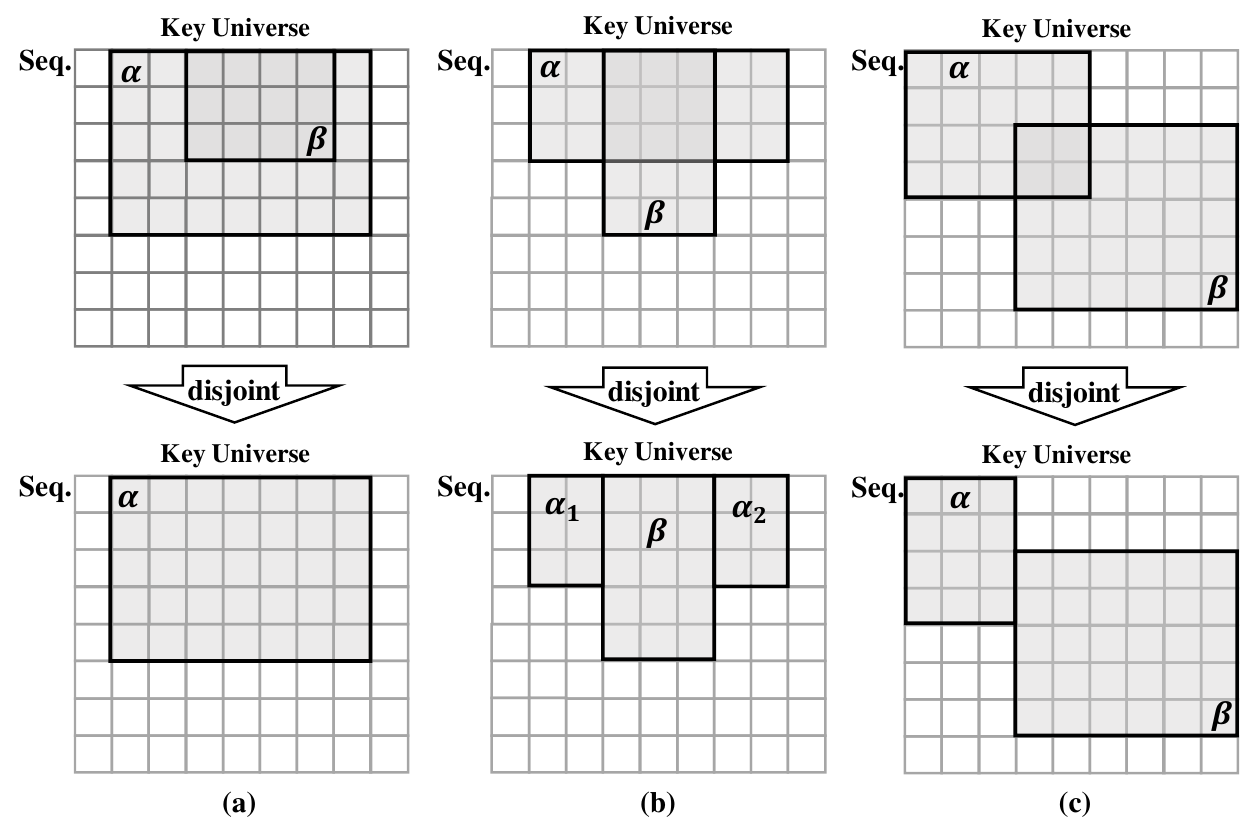} 
  \caption{Disjointization between two effective areas.}
\label{fig:effective-area-merge}
\vspace{-5mm}
\end{figure}

\vspace{1mm}
\noindent
{\bf Effective Areas Disjointization}
As introduced above, the overlaps among effective areas are the main reason for the unsatisfying query performance of R-tree. Therefore, eliminating overlaps among them is effective to enhance the query performance guarantee. However, this is infeasible for conventional R-trees since the stored rectangles are unchangeable. In contrast, the disjoint property of effective areas makes it achievable in our context.

In LSM-trees, two effective areas overlap when their associated range records invalidate the same key range. Since the latter range record carries more recent information, its efficacy dominates the earlier one within the overlapped key range. Thus, we can reorganize them to preserve only the most recent range record in each key interval to eliminate overlaps. This process we call {\it disjointization}.



\begin{figure*}[t]
\vspace{-6mm}
\centering
  \includegraphics[width=0.99\linewidth]{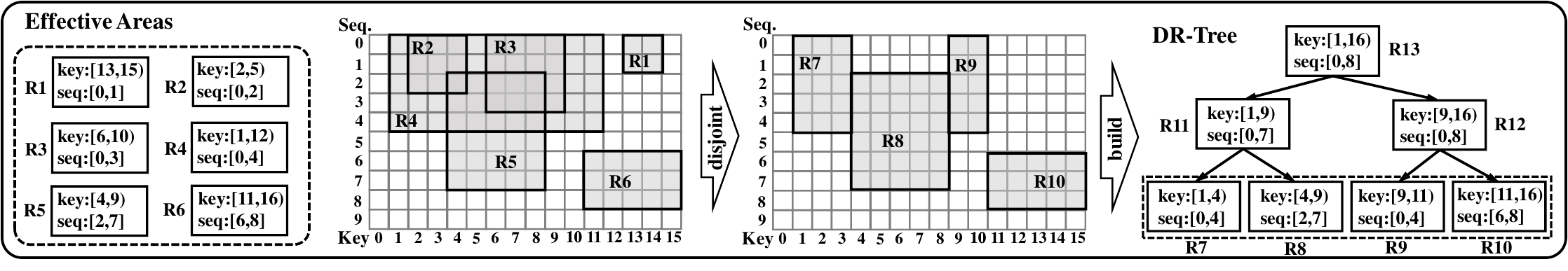} 
  \vspace{-4mm}
  \caption{An illustration of DR-tree construction from a set of effective areas.}
\label{fig:lsm_frtree_flush}
\vspace{-5mm}
\end{figure*}

Figure~\ref{fig:effective-area-merge} illustrates the three possible cases between two overlapping effective areas, denoted as $\alpha$ and $\beta$.
(a) When both the key and sequence ranges of $\beta$ are fully contained in those of $\alpha$, $\beta$ is completely dominated and can be directly replaced by $\alpha$.
(b) When $\beta$’s key range is contained in $\alpha$ but its sequence range only partially overlaps, $\beta$ updates part of $\alpha$’s key range while the remaining region is still dominated by $\alpha$. In this case, $\alpha$ is split by the key range of $\beta$ in the working space. Notably, when both start and end keys differ for $\alpha$ and $\beta$, a new effective area would be introduced after the disjointization.
(c) When both the key and sequence ranges partially overlap, as the figure presents, the overlapped region is dominated by $\beta$ due to its higher sequence number. Hence, $\alpha$ is trimmed to disjoint $\beta$ in the key domain. 
A potential concern is that trimming $\alpha$ might expose obsolete entries not covered by $\beta$. This does not occur because an effective area inherently defines the key and sequence ranges where its range delete is valid. When such an area is created, no matched entries exist in the LSM-tree that were issued before its starting sequence number. Therefore, trimming this area does not compromise the correctness of the resulting effective areas.


\begin{lemma}
For a set of effective areas $\mathcal{A} = \{\alpha_1, \alpha_2, \cdots , \alpha_n \}$, disjointization generates a new set $\mathcal{B} = \{\beta_1, \beta_2, \cdots , \beta_m \}$, where an entry can be covered by at most one ${\beta_i \in \mathcal{B}}$ in the working space.
\label{lemma: effective area merge}
\end{lemma}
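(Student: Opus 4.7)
The plan is to prove the stronger statement that the output set $\mathcal{B}$ is \emph{pairwise disjoint} in the working space, which immediately yields the lemma. I would treat disjointization as an iterative rewriting process: at each step, pick any overlapping pair $(\alpha, \beta)$ with $\beta$ having the strictly larger (more recent) ending sequence number, and replace that pair by the output of the corresponding case (a), (b), or (c) of Figure~\ref{fig:effective-area-merge}. The proof is then an induction on the number of overlapping pairs in the current working set, with the three-case analysis serving as the inductive step, and the base case being the trivial one where $\mathcal{A}$ already contains no overlaps.

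First I would formalize each effective area as an axis-aligned rectangle $[k_s, k_e) \times [s_s, s_e)$ in the key–sequence plane, so that ``covers'' becomes point containment and ``overlap'' becomes nonempty rectangle intersection. For the inductive step, I would verify three properties per case: (i) the pieces that replace $\{\alpha, \beta\}$ are pairwise disjoint among themselves and also disjoint from $\beta$, which is always retained intact by the definition of the three cases; (ii) every produced piece is contained in the bounding box of $\alpha \cup \beta$, so no new overlap is introduced with any $\gamma$ that was previously disjoint from both $\alpha$ and $\beta$; and (iii) the union of entries invalidated by the working set, under the temporal semantics established by Lemma~\ref{lemma: effective area}, is preserved. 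Termination of the whole process then follows from a monotone potential such as the total two-dimensional measure of pairwise rectangle intersections, which is nonnegative and strictly decreases at each step.

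The main obstacle I anticipate is case (b), where splitting $\alpha$ produces a brand-new rectangle and thus raises the concern that the overlap-count invariant in the induction could be violated. The resolution is geometric: the split piece is a subset of the original $\alpha$, so any $\gamma$ overlapping the new piece must already have overlapped $\alpha$, which keeps the overlap count non-increasing. A secondary subtlety is the correctness side-condition for case (c) — trimming $\alpha$ must not leave obsolete entries uncovered — which is handled precisely by the temporal argument in the paragraph preceding the lemma: because no matched entry with sequence number below $s_s(\alpha)$ can exist within $\alpha$'s key range in the LSM-tree at the moment $\alpha$ was admitted, the trimmed region contains no entry that still requires invalidation, so property (iii) is maintained.
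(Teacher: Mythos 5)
Your proposal is correct and takes essentially the same route as the paper, which in fact states Lemma~\ref{lemma: effective area merge} without a formal proof and only asserts informally that disjointization of many overlapping areas ``can always be decomposed'' into the three pairwise cases of Figure~\ref{fig:effective-area-merge}; your explicit induction, the observation that every produced piece is a subset of the original $\alpha$ (so resolving one pair cannot create overlaps with previously disjoint areas), and the measure-based termination potential are exactly the rigor that sketch is missing. One small imprecision: in your property (ii) the containment should be in $\alpha \cup \beta$ itself rather than its bounding box (a piece inside the bounding box but outside $\alpha \cup \beta$ could meet a previously disjoint $\gamma$), but your later remark that each new piece is a subset of $\alpha$ already repairs this.
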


The above discussion examines the overlap scenarios between two effective areas. When multiple effective areas overlap, their disjointization can always be decomposed into these fundamental cases as stated in Lemma~\ref{lemma: effective area merge}. Though new elements would be generated after disjointization, the total number is no more than twice of the original set. For instance, Figure~\ref{fig:lsm_frtree_flush} illustrates disjointization with six effective areas that form a sequence of disjoint effective areas sorted by key order. These disjoint areas can then be recursively merged to construct higher-level parent nodes, leading to a tree structure termed the {\bf Disjoint R-tree (DR-tree)}. 

\vspace{1mm}
\noindent
{\bf Remark.}
Owing to the disjoint property, each key is covered by at most one leaf node (Lemma~\ref{lemma: effective area merge}), and consequently, only a single internal node needs to be accessed per level during query processing. This property provides a strong theoretical bound on the worst-case query complexity and substantially improves lookup efficiency compared with traditional R-trees.  


\vspace{1mm}
\noindent {\bf Build DR-tree from Effective Areas.}
Obviously, deriving disjoint effective areas and ordering them by key is essential for DR-tree construction. While pairwise disjointization across multiple overlaps is computationally expensive, we provide a more efficient approach.

As shown in Figure~\ref{fig:lsm_frtree_flush}, the disjointization results occupy separate key intervals and are outlined by the most recent effective area covering that range. Hence, we can scan the effective areas along the key universe to determine the boundaries of these key intervals and record their dominant effective area.  
To this end, we use two min-heaps to store the start and end keys of all effective areas and scan them by repeatedly extracting their smallest key. Meanwhile, a max-heap, {\it curr}, is employed to maintain currently active areas, whose start keys have been processed but end keys remain pending. These effective areas are ordered by sequence numbers within this max-heap, ensuring that its top always tracks the dominant effective area for the preceding key interval.
 
A new interval begins during scanning whenever a start key has a larger sequence number than the current dominant area, as illustrated by R4, R5, and R6 in the figure. As a more recent element arrives, the top of {\it curr} is replaced and temporarily terminates its efficacy at this point, which generates a new record in the result, as R7 does. It is worth mentioning that the replaced effective area is still kept in {\it curr} since the end key has not been reached. Hence, it is possible to become the dominant effective area again, like R9.
Then, when the end key of the current dominant area is encountered, it indicates the termination of this area that should be recorded with the preceding key interval. Besides, if {\it curr} remains non-empty, the endpoint simultaneously marks the beginning of the next key interval (e.g., R9). By iterating through all start and end keys in this manner, we obtain a complete set of disjoint effective areas naturally ordered by key.

Whereas updating an RD-tree on disk is still costly since the inserted effective areas would change the layout of leaf nodes. To address this issue, we further design {\bf LSM-DRtree} that maintains the outstanding query efficiency of the RD-tree while achieving more favorable update performance.

\vspace{1mm}
\noindent {\bf LSM-DRtree Structure.}
The LSM-DRtree introduces an LSM-style design for the DR-tree to significantly enhance its update performance. A similar idea was also explored in~\cite{shin2021lsm} that introduces LSM-Rtree to enhance R-tree's update efficiency. However, our approach targets a fundamentally different structure, DR-tree, which requires distinct configurations, flushing method, and compaction mechanisms. LSM-DRtree maintains an in-memory R-tree as the write buffer and organizes multiple on-disk levels, each storing a DR-tree. Similar to the LSM-tree, the capacity of each level grows by a fixed size ratio. Under this design, once a DR-tree is written to disk, it remains immutable until the next compaction, thus effectively avoiding the costly in-place updates required when directly storing an entire RD-tree on disk.
Moreover, although updates to the write buffer are performed in memory and do not incur extra I/O cost, frequently updating a DR-tree would still impact the system performance.
To mitigate this, we employ an R-tree as the write buffer to further accelerate updates. 
We next detail the flush and compaction workflows for the index construction.

\vspace{1mm}
\noindent {\bf Construction of LSM-DRtree.}
Upon receiving a range record, its effective area is inserted into the in-memory R-tree. When the R-tree reaches its maximum capacity, a flush process is triggered. During this process, all effective areas stored in the R-tree’s leaf nodes are extracted to construct a DR-tree as introduced earlier. The resulting DR-tree is serialized level by level, ensuring that nodes within the same level are physically clustered and sorted by key range. 
Since flushes occur less frequently and the write buffer is limited in size, this process incurs moderate overhead.

When a disk level becomes full, a compaction process is triggered to merge its DR-tree with that of the next level. 
Benefiting from the fact that effective areas in each DR-tree are already disjoint and sorted by key order, the merging can be performed entirely through sequential I/O, avoiding random disk access. The compaction proceeds as a simple two-way merge: an iterator is built for each DR-tree, and at each step, the system retrieves the effective areas with the smallest start keys from both iterators, performs disjointization between them if they overlap, and proceeds iteratively until all effective areas have been processed. 

\vspace{1mm}
\noindent
{\bf Remark.} The compaction workflow of LSM-DRtree only involves pairwise disjointization between two effective areas at a time, avoiding the more intricate process for building DR-tree from arbitrary effective areas. Moreover, its streaming nature enables on-the-fly execution without introducing additional I/O overhead, thereby ensuring high update efficiency. Compared with LSM-Rtree, LSM-DRtree offers a much simpler merging process that avoids complex spatial alignment among rectangles, which achieves approximately 11\% faster construction latency in our experiments.
We next describe how LSM-DRtree supports range deletes and point lookups in LSM-based key-value stores and quantitatively analyze the operational overhead.



\noindent 
{\bf Range Delete with Global Index.}
Each range delete inserts a range record into the LSM-DRtree. The record is represented as a rectangular effective area that is stored by two ends of the key range and two sequence numbers. Meanwhile, the size of the sequence number is typically much smaller than the keys. Hence, the size of each range record can be formulated as $2k$. With the LSM structure, the range record would be rewritten during flush or compaction processes, which would incur subsequent overhead. 

\begin{lemma}
Updating a global index with $\tfrac{N}{\lambda}$ range records leads to $O(\frac{k}{B} \cdot T \cdot \log_T(\frac{N}{\lambda F}))$ amortized I/O cost.
\label{lemma: update lsm-rtree}
\end{lemma}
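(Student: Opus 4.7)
The plan is to mirror the standard amortized-merging analysis of an LSM-tree, substituting effective-area records for key--value entries. Concretely, I want to charge each range delete with its total share of flush and compaction I/O in the LSM-DRtree and then simplify to the stated bound.

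First, I would fix the level structure. The in-memory R-tree write buffer holds up to $F$ effective areas, and each on-disk level has capacity $T$ times the previous, so to house $N/\lambda$ records the number of on-disk levels is $L' = \Theta(\log_T (N/(\lambda F)))$. Second, I would cost a single flush or compaction. Because the effective areas at each level are already disjoint and sorted by key (by Lemma on effective-area disjointization), both flushes from the R-tree write buffer and compactions between adjacent on-disk levels reduce to a streaming two-way merge. Each effective area is encoded by two boundary keys plus a constant-size pair of sequence numbers, so its size is $2k + O(1) = O(k)$ bytes. Reading and rewriting one record therefore costs $O(k/B)$ sequential I/Os, and no random access is incurred because the merge is entirely scan-based. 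Third, I would bound the number of rewrites a record endures. Under the leveling configuration, a record at level $i$ can be touched at most $O(T)$ times before being pushed to level $i+1$, because each compaction there absorbs a full upper level that is a factor $T$ smaller. Summing over the $L'$ on-disk levels yields $O(T \cdot L')$ rewrites per record.

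The delicate step is handling disjointization. A merge can split an existing area into two (case (b) of the pairwise disjointization rules), so the output of one merge may exceed the simple sum of the inputs. Here I would invoke the bound already noted that disjointization at most doubles the number of effective areas in a single step, and then argue by a charging scheme that this constant factor does not compound across levels. Concretely, I would charge every split produced at level $i+1$ to the incoming area from level $i$ whose key boundary caused it, so that over the whole history of the structure the total number of effective-area instances ever materialized is $O(N/\lambda)$. Combining this with $O(T\cdot L')$ rewrites per record and $O(k/B)$ I/Os per rewrite, the total I/O for inserting $N/\lambda$ range records is $O(\tfrac{N}{\lambda}\cdot \tfrac{k}{B}\cdot T \cdot \log_T(N/(\lambda F)))$, giving the claimed amortized per-range-delete cost of $O(\tfrac{k}{B}\cdot T\cdot \log_T(N/(\lambda F)))$.

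The main obstacle I expect is the disjointization bookkeeping just described. A naive appeal to the ``at most doubles'' lemma leaves open the possibility of cascading splits across $L'$ levels, which would introduce a spurious $2^{L'}$ factor. The resolution I would rely on is that the splits are charged to specific boundary keys of incoming records: each of the $N/\lambda$ inserted records contributes only $O(1)$ such boundaries, and after a split the resulting areas remain disjoint and do not themselves seed further splits except through new incoming boundaries at the next level. A clean potential-function argument (potential proportional to the number of live effective areas) would make this charging rigorous, after which the rest of the proof is routine LSM accounting.
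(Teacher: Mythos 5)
Your proof follows essentially the same route as the paper's own sketch: count the $L' = O(\log_T(N/(\lambda F)))$ levels of the LSM-DRtree, charge each effective area $O(T\cdot L')$ rewrites at $O(k/B)$ sequential I/Os each, and divide the total by the $N/\lambda$ range deletes. The one place you go beyond the paper is the disjointization bookkeeping: the paper simply asserts that $N/\lambda$ records yield at most $2N/\lambda$ effective areas, whereas you correctly flag the risk of cascading splits across levels and close it with a charging argument on boundary keys (equivalently, the final decomposition's interval endpoints are a subset of the $2N/\lambda$ original boundaries, so the doubling never compounds) --- a worthwhile tightening of a step the paper leaves implicit.
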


\begin{proof}[Proof Sketch]
The $\tfrac{N}{\lambda}$ range records produce at most $\tfrac{2N}{\lambda}$ effective areas after disjointization. To store them, the LSM-DRtree should introduce $L' = O(\log_{T}(\frac{N}{\lambda F'}))$ levels, where $F'$ and $T$ are the write buffer size and size ratio, respectively. Though the write buffer of the global range record index is typically smaller than that of the LSM-tree, $F'=O(F)$ still holds. Within an LSM-DRtree, the effective areas can eventually be stored in the bottommost level, after experiencing $O(TL')$ times compactions. Therefore, the overall construction I/O cost is $O(\frac{2N}{\lambda} \cdot \frac{T \cdot 2k}{B} \cdot \log_{T}(\frac{N}{\lambda F}))$, leading to the range delete cost for each operation as $O(\frac{k}{B} \cdot T \cdot \log_{T}(\frac{N}{\lambda F}))$ I/Os. 

\end{proof}

\vspace{-3mm} 
\noindent
{\bf Remark.}
By the above lemma, the update cost is lower than that incurred by the local range delete method, particularly for typical cases where $\lambda$ is large.

\vspace{1mm} 
\noindent {\bf Point Lookup with Global Index.}
The point lookup begins with searching the target key in an LSM-tree. If no matching key is found, the query terminates immediately without accessing the LSM-Rtree. As previously discussed, this leads to the I/O cost of $\lceil \varphi \log \! \tfrac{N}{F} \rceil$.  Otherwise, if a matching entry is found, the global index should be checked to verify whether it has been invalidated by subsequent range deletions. 

\begin{lemma}
After inserting $\tfrac{N}{\lambda}$ range records, global index checks validity of an entry with $O(\log_T^2(\frac{N}{\lambda F}))$ I/O cost in the worst case.
\label{lemma: query lsm-rtree}
\end{lemma}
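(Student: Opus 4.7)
The plan is to bound the worst-case I/O cost of a validity check by decomposing it into two factors: the number of LSM-DRtree levels probed, and the cost of a single DR-tree query at one level. From the setup in the proof of Lemma~\ref{lemma: update lsm-rtree}, the LSM-DRtree has $L' = O(\log_T(\frac{N}{\lambda F}))$ immutable on-disk levels, each holding exactly one DR-tree. Since a key might be invalidated by a range record residing at any level, in the worst case a validity check must consult all $L'$ DR-trees independently, so it suffices to bound the cost of a single DR-tree lookup and multiply by $L'$. The in-memory R-tree buffer can be ignored for I/O accounting.

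For a single DR-tree, the key observation is Lemma~\ref{lemma: effective area merge}: after disjointization, no two leaf effective areas share a key. Because a DR-tree is built by recursively grouping consecutive (in key order) leaves as described in Section~\ref{subsec:lsm-rtree}, the projection of any internal node's MBR onto the key dimension equals the union of its children's key intervals, so the key projections of sibling MBRs are pairwise disjoint. Consequently, for any query key $v$, at most one internal node per DR-tree level has an MBR whose key projection contains $v$, and the descent visits exactly one node per level. This costs $O(h)$ I/Os, where $h$ is the DR-tree's height. The tallest DR-tree lives at the bottommost LSM-DRtree level and stores $O(\frac{N}{\lambda})$ disjoint effective areas, which gives $h = O(\log_T(\frac{N}{\lambda F}))$ under the base-$T$ fanout convention already used for $L'$. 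Multiplying the two bounds yields the claimed $O(\log_T^2(\frac{N}{\lambda F}))$.

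The main obstacle I anticipate is making the "one node per level" argument airtight. I need to show that key-disjointness of leaves is faithfully inherited by all internal MBRs under the recursive grouping, and, more delicately, that this invariant is preserved across compactions, which take two already-disjointed DR-trees, re-disjoint their merged set of effective areas via the streaming two-way merge of Section~\ref{subsec:lsm-rtree}, and rebuild parent MBRs from scratch. Once this invariant is established, the product-of-logs bound is immediate; the preceding LSM-tree search cost reported in Table~\ref{tab:costs} is accounted separately, since the global index is consulted only after a candidate entry is already located.
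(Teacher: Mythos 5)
Your proposal is correct and follows essentially the same route as the paper's proof: both decompose the cost into the $L' = O(\log_T\frac{N}{\lambda F})$ LSM-DRtree levels times a per-level DR-tree descent that visits one node per DR-tree level by the disjointness guarantee of Lemma~\ref{lemma: effective area merge}. The only cosmetic difference is that you bound the total by $L'$ times the maximum DR-tree height while the paper sums the heights $\sum_i(\log_D Q_i + 1)$ exactly, which yields the same $O(\log_T^2\frac{N}{\lambda F})$ asymptotics.
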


\begin{proof}[Proof]
Following the design philosophy of LSM structures, checking the global index proceeds the LSM-DRtree level by level. In order to figure out the overall query cost, we first analyze the associated costs at the $i$-th level that contains $Q_i$ range records. As discussed, these range records result in up to $2Q_i$ effective areas after disjoinitzation that are stored in the leaf nodes of a DR-tree. Following Lemma ~\ref{lemma: effective area merge}, only one leaf node may cover the target key, which requires one I/O to retrieve. Moreover, storing these effective areas requires a DR-tree with $O(\log_D(Q_i))$ levels, and one node would be accessed in each level during the searching process due to the property of DR-tree. Moreover, all $L'$ levels of the LSM-DRtree would be searched. Note that $Q_i=F'\cdot (L')^{i}$, we immediately have the cumulative cost:

\vspace{-3mm}
{\small
\begin{equation}
    Z_0 = \sum_{i}^{L'} \left(\log_DQ_i + 1\right)=  O\left(\frac{L' \cdot \left(L' + 1\right)}{2} \cdot \log_{D}{T} + L' \cdot \log_D{F'}\right)
\end{equation}
}

Also, we have $D\ge T$ and $F'=O(F)$. The query cost can then be simplified to $O((L')^2)=O(\log_T^2(\frac{N}{\lambda F}))$. 

\end{proof}

\vspace{-2mm} 
\noindent
{\bf Remark.} After issuing $N/\lambda$ range deletes, the local range delete method incurs I/Os linear to $N/\lambda$ for point lookups. The above lemma shows that {\ourmethod} significantly improves this cost to poly-logarithmic scale regarding $N/\lambda$.

\vspace{1mm}

\begin{figure}[t]
\vspace{-2mm}
\centering
  \includegraphics[width=0.99\linewidth]{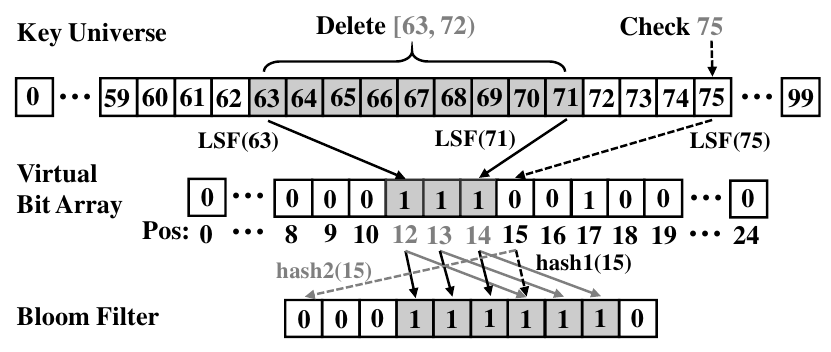} 
  \vspace{-3mm}
  \caption{The range-aware estimator (RAE) uses the virtual bit array to encode key ranges in a Bloom filter.}
\label{fig:filter_bloom_structure}
\vspace{-4mm}
\end{figure}

\subsection{Predictive Shortcut for Point Lookup}
\label{subsec:filter}
The global range record index can efficiently determine the validity of entries during point lookup. However, it is noticed that, according to the introduced point lookup workflow, all entries would be checked, including the valid ones that are not included in the global index. This observation motivates a predictive shortcut: if valid entries can be quickly identified before entering the global index with a certain probability, the results can be directly returned without incurring global index related overhead, thereby, further reducing the amortized point lookup cost.

This requires a dedicated component to store range record information and determine the coverage of individual keys without false negative estimation, ensuring that obsolete entries are not mistakenly returned during point lookups. Furthermore, this component is expected to be efficient and lightweight to avoid excessive overhead that otherwise degrade system performance, as formally defined in Problem ~\ref{prob:validity filter}. 
In addition, since range deletes are continuously applied to the database, this structure must deliver desirable dynamic performance. Hence, we propose {\bf entry validity estimator (EVE)}, a practical design that well supports above requirements with its efficacy evaluated in our experiments.


\begin{problem}
Given a set $S$ of $n$ key ranges within an integer universe $U = \{0, \ldots, U-1\}$, build a space-efficient in-memory data structure that answers emptiness queries of the form $q \cap S \neq \emptyset$? for any $q$ in $U$. The structure is allowed to return ``not empty'' when actually $q \cap S = \emptyset$ (i.e. false-positive error) with certain probability.
\label{prob:validity filter}
\end{problem}

A naive solution is to employ a Bloom filter to store every key within a range record. 
However, each range record corresponds to a large number of individual keys and inserting all of them into the Bloom filter leads to a significantly degraded false positive rate (FPR). As a result, many valid entries are likely to be incorrectly identified as obsolete, thus still requiring to consult global index. Moreover, processing every key within a range record further incurs significant computational costs, making this approach impractical.

Therefore, it is crucial to leverage range information inherent in range deletes. For this purpose, we propose a {\bf range-aware estimator (RAE)} that integrates a range encoding strategy with the Bloom filter, as presented in Figure~\ref{fig:filter_bloom_structure}. For RAE, a linear scaling function is employed to map the key universe into a bit array, where each bit represents a key range rather than an individual key. Consequently, a deleted key range can be represented by only a few bits. Then, the positions of these bits are recorded in the Bloom filter. In this process, the bit array used for range encoding is only a conceptual aid to derive bit positions that does not need to be physically stored. Hence, we refer to it as a \emph{virtual bit array}. 
This approach substantially reduces the number of insertions and greatly improves both estimation accuracy and efficiency. 

In Figure~\ref{fig:filter_bloom_structure}, when a range delete is issued, the two boundaries of its key range are mapped through the linear scaling function to identify the occupied virtual bit segment [12,14]. The positions within this segment are then inserted into the Bloom filter. For checking key 75 during point lookups, it is similarly mapped onto the virtual bit array to check the position in the Bloom filter. Since the Bloom filter returns a negative result, this key is definitely not covered by any range delete, and thus the corresponding entry can be directly returned without probing the global range record index.

\begin{figure}[t]
\vspace{-2mm}
\centering
  \includegraphics[width=0.99\linewidth]{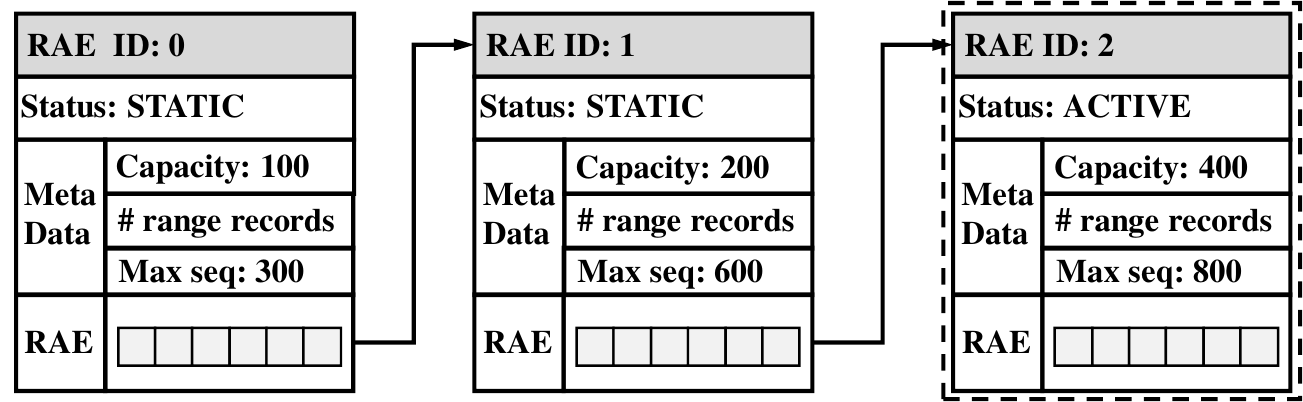} 
  \vspace{-3mm}
  \caption{The structure of the entry validity estimator (EVE).}
\label{fig:filter_hybrid_structure}
\vspace{-7mm}
\end{figure}

Meanwhile, with more range records inserted, the estimation accuracy of a single RAE can gradually degrades, and eventually becoming ineffective. This necessitates an extendable structure that can scale with the growing number of range deletes to support the application of key-value stores.
Therefore, we adopt a chained structure for the RAE, as illustrated in Figure~\ref{fig:filter_hybrid_structure}, which completes the design of our entry validity estimator.
To construct a EVE, an empty RAE is initialized before workload execution. This RAE accepts incoming range deletes and maintains the smallest and largest sequence numbers among them. When the current RAE reaches its capacity, a new active estimator with doubled capacity is created, while the previous one is marked as static. During a point lookup, once the target entry is found in the LSM-tree, the EVE is queried. The active RAE is checked first. If it returns a positive result, the entry may have been deleted by a range delete and must be verified against the global range record index. Otherwise, the query proceeds recursively to the most recent RAE until either a positive result is obtained or the entry’s sequence number exceeds that of the RAE. If all subfilters return negative, the entry is guaranteed to be valid and can be returned without further cost. 

\vspace{1mm}
\noindent
{\bf Discussion.} Therefore, if the target entry is valid, EVE can reduce the point lookup cost from $O(\log_{T}^2 (N /\lambda F))$ to  $O(\varepsilon \cdot (\log_{T}^2 (N /\lambda F)))$, where $\varepsilon$ is the false positive rate of EVE. 
We acknowledge that similar functionality is also achievable for certain dynamic range filters. However, these filters are typically designed to store individual keys hence struggle to offer sound performance in this case,
particularly under limited memory budgets. Moreover, their efficacy is confined to the key space, whereas EVE can leverage sequence number with its chained design. Nevertheless, we compare EVE with several competitive range filters in Section ~\ref{sec:evaluate}, which achieves the highest accuracy with over 20\% lower false positive rate.


\begin{figure*}[t]
\vspace{-6mm}
\centering
  \includegraphics[width=0.98\linewidth]{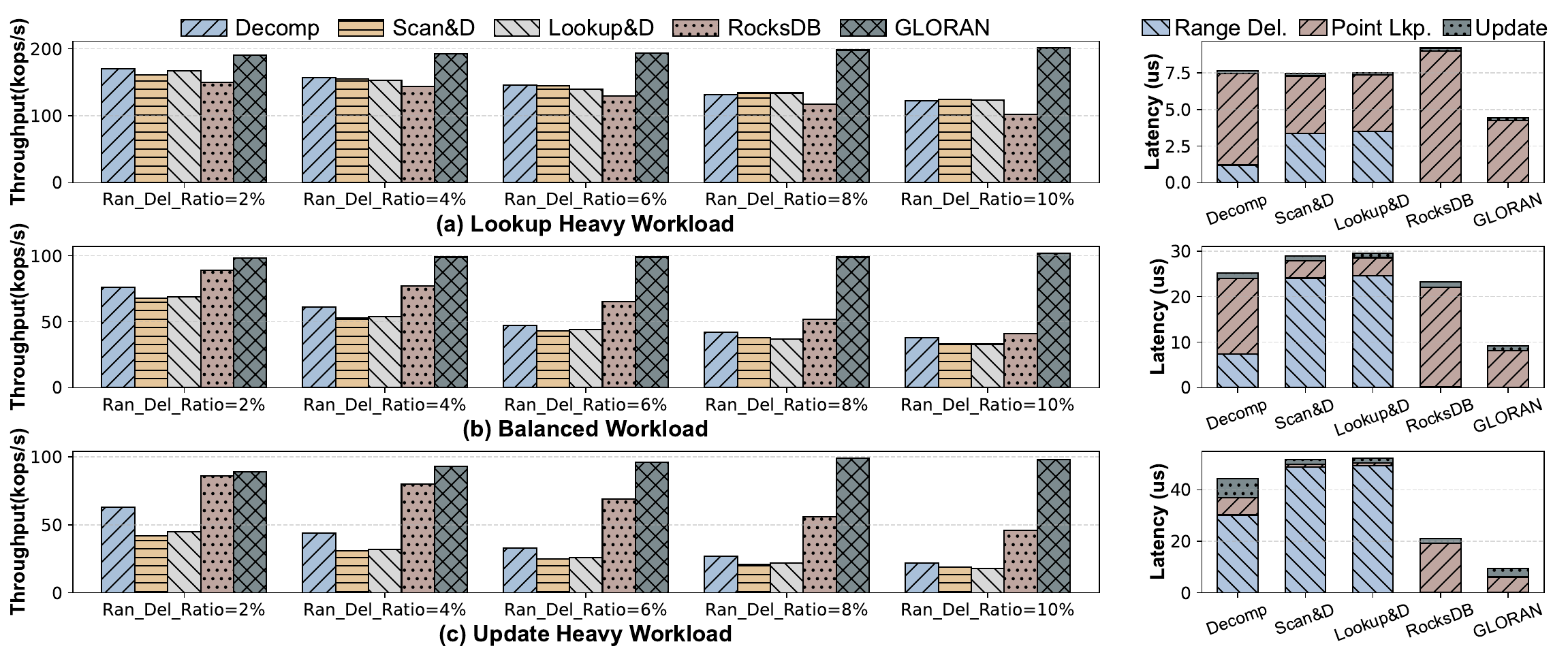} 
  \vspace{-3mm}
  \caption{We evaluate various range delete methods across diverse workloads. The results show that GLORAN consistently delivers the best performance across all test cases while enabling fast range deletes and efficient point lookups.}
\label{fig:group_1_workload}
\vspace{-3mm}
\end{figure*}

\subsection{Space Amplification of GLORAN}
The space amplification caused by obsolete entries under {\ourmethod} is comparable to that of other methods, since these entries can be properly reclaimed during compaction. In addition, range deletes introduce extra space overhead for storing deletion metadata. For this part, {\ourmethod} exhibits satisfying space efficiency, as each range record occupies a compact size, unlike the approaches that insert a tombstone for every key. We further compare the size of our global index with the range tombstone blocks used in LRR, another range record method. At the $i$-th level of the LSM-DRtree, which contains $Q_i$ range records, a DR-tree maintains at most $2Q_i$ leaf nodes after effective area disjointization. The number of nodes in higher levels decreases by a factor of $D$ from bottom to top. Denoting the total number of levels as $L_i$, the total number of nodes $R_i$ in this DR-tree can be expressed as:

\vspace{-2mm}
{\small
\begin{equation}
R_i = \frac{D}{D-1} \cdot \frac{D^{L_i}-1}{D^{L_i}} \cdot 2Q_i < \frac{D}{D-1} \cdot 2Q_i
\end{equation}
}

Aggregating the DR-trees across all levels of the LSM-DRtree, the overall size of the global range record index is bounded by $O(\frac{D}{D-1} \cdot Q \cdot k)$. Since $D$ is the DR-tree fanout (an integer greater than 2), $\frac{D}{D-1}$ is smaller than 2. Therefore, the space overhead of {\ourmethod} is $O(Q \cdot k)$, which is asymptotically the same as that of LRR. This result indicates that the space amplification of {\ourmethod} remains modest and well bounded.


In addition, {\ourmethod} incorporates a garbage collection (GC) strategy to clear obsolete range records from the global index to ensure space overhead and query efficiency. 
A range record becomes obsolete in two cases. The first occurs when its effective area is fully covered by a subsequent record, which is automatically handled by the LSM-DRtree compaction through effective area disjointization. The second arises when all entries it matches have been removed from the LSM-tree. To identify such cases, we attach an event listener to detect compactions on the bottommost level and record the largest sequence number of the resulting data. After such compaction, GC is triggered to purge elements whose sequence numbers are below this threshold.
Since outdated records are typically concentrated at the bottommost level of LSM-DRtree, GC is confined to this level to reduce overhead. Moreover, the same threshold is also used to drop outdated RAEs in the entry validity estimator, for further improving its accuracy and query efficiency.

\vspace{-2mm}
\section{Related Work}
\noindent\textbf{Key-value Stores Enhancement.} 
LSM-based key–value stores are applicable to a wide range of use cases across diverse application domains~\cite{mo2025aster, wang2023mirrorkv, chen2023chainkv, pang2021arkdb, yu2024lsmgraph, chen2021block}.
Research on LSM‑based key–value stores mostly targets enhancing the system performance, including overall throughput~\cite{luo2018toain} and space efficiency~\cite{lv2025rethinking} via improving different operations like point lookups, range queries, and updates. Point lookup improvements refine or replace Bloom filters~\cite{dayan2017monkey,zhang2018elasticbf,zhu2021reducing,dayan2021chucky,li2022seesaw}, adapt data distribution~\cite{zhang2022sa,zhang2022bi}, and exploit spatial locality~\cite{vu2021incremental,eldawy2021beast}. Range queries benefit from succinct tries, hierarchical or hybrid range filters, and learned or adversarial‑resistant models~\cite{zhang2018surf,luo2020rosetta,mossner2023bloomrf,knorr2022proteus,wang2023rencoder,vaidya2022snarf,oasis2024,costa2024grafite}. Update throughput boosts via compaction scheme variants~\cite{pan2017dcompaction, ahmad2015compaction, thonangi2017log, yao2017light, raju2017pebblesdb, idreos2019designcontinuum, mo2023learning, wu2015lsmtrie}, key–value separation~\cite{lu2017wisckey, chan2018hashkv}, learned indexes~\cite{dai2020wisckey}, and parallelism~\cite{huang2021nova, yu2022treeline}. While memory allocation tuning~\cite{luo2020breaking,kim2020robust} and finer‑grained compaction cut space amplification~\cite{dayanspooky,alkowaileet2019lsm,mao2020comprehensive}.

\vspace{1mm}
\noindent\textbf{Range Deletion Optimization.} There are several prior works~\cite{gartner2001efficient, lilja2006online, bhattacharjee2007efficient} researching on the bulk deletion in relational databases. However, this typically follows the direction for read enhancement, which pursues quick location and entry clustering through sorting or hashing, which is different from the range deletion in LSM-trees.
Nie {\it et al.} ~\cite{nie2024zone} tailored deletion for specified device like ZNS SSD. Sarkar {\it et al.} ~\cite{sarkar2020lethe} proposed Lethe to reduce the lifetime of tombstones with FADE, a novel compaction strategy. Whereas, these works mainly focus on the optimization for point deletion, which requires different techniques compared to our target operation range delete. Moreover, Lethe also discussed range deletes on a secondary delete key, where the range deletion is not applied to the primary key in the LSM-tree. This requires optimization concerning the data layout which falls within another field.

\vspace{-4mm}
\section{Evaluation}
\label{sec:evaluate}
This section presents evaluation results of different methods across diverse test cases. All experiments are conducted on a server equipped with a 13th Gen Intel(R) Core i9-13900K CPU @ 5.80GHz, 128GB DDR4 RAM, and a 2TB NVMe SSD, running 64-bit Ubuntu 22.04.5 LTS on an ext4 file system.

\vspace{1mm}
\noindent{\bf Implementation.}
We implement {\ourmethod} on top of RocksDB~\cite{rocksdb}, a widely adopted key-value store used in many prior studies~\cite{dostoevsky2018, dayan2019log, sarkar2020lethe, huynh2021endure, wang2024grf, liu2024structural}. Our implementation offers various tuning factors to support flexible customization of {\ourmethod} structure, such as the memory buffer and size ratio of the global LSM-DRtree, as well as the fanout of DRtrees. For the entry validity estimator, the capacity and memory of each RAE are also adjustable. 

\begin{figure*}[t]
\vspace{-6mm}
\centering
  \includegraphics[width=0.99\linewidth]{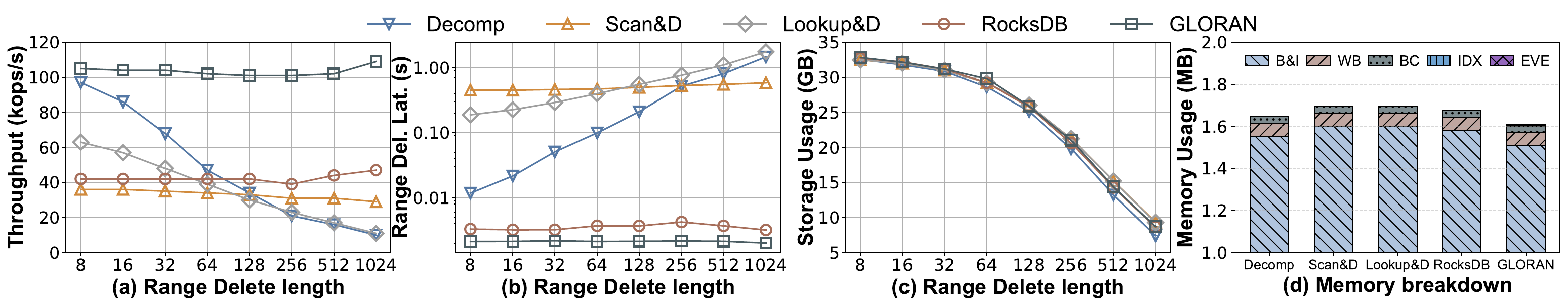} 
  \vspace{-3mm}
  \caption{The performance of different range delete methdos under various range delete lengths.}
\label{fig:group_2}
\vspace{-5mm}
\end{figure*}

\vspace{1mm}
\noindent{\bf Baselines and Implementation.}
We include all existing range delete methods for LSM-based key-value stores and build them on RocksDB~\cite{rocksdb}.
The decomposition method (abbr. {\decom}) deletes all keys within the target range individually using the {\it Delete} API. The lookup-and-delete (abbr. Lookup\&D) performs a point lookup for each key with {\it Get} API and only {\it Delete} the existing ones. The scan-and-delete method (abbr. Scan\&D) substitutes the point lookups with a range scan employing an {\it iterator} to identify the existing keys for deletion. The local range record method (abbr. RocksDB) is natively supported in RocksDB, so we use the built-in {\it DeleteRange} API directly. Moreover, Lethe~\cite{sarkar2020lethe} also discusses deletions in LSM-based key-value stores. Whereas it focuses on a fundamentally different operation, secondary range delete. Nevertheless, we also evaluate it in a case study based on the code provided by the authors~\cite{lethecode}.

\vspace{1mm}
\noindent{\bf Experimental Setting.}
In our experiments, we retain most parameters as RocksDB’s default settings, which are developer-optimized for general SSD-based applications~\cite{tuningguide}. In this setting, the memory buffer is configured to 64 MB. To enrich the evaluation, we also vary this value from 8 MB to 256 MB (Figure~\ref{fig:group_3_2} (a)). Besides, the Bloom filter is assigned 10 bits per entry. For the global range record index in {\ourmethod}, we explore different memory buffer sizes (Figure~\ref{fig:group_3_1} (b)) and size ratios (Figure~\ref{fig:group_3_1} (c)), with defaults of 4 MB and 10, respectively. For EVE, the first RAE is configured to hold 0.8 million range records with 10 bits per record, which we vary in Figure~\ref{fig:group_4} (c).
We evaluate these methods under workloads with different ratios of updates and point lookups. To access the impact of range deletes, we replace varying percentages of updates with range deletes (Figure~\ref{fig:group_1_workload}) and adjust their lengths to measure the impact on throughput, space amplification, and memory usage (Figure~\ref{fig:group_2}). Each test runs 100 million randomly generated operations on an empty database, with direct I/O and block cache enabled following prior work~\cite{sarkar2020lethe,dayan2019log,liu2024structural,mo2023learning,wang2024grf,huynh2024towards}. Each entry consists of a 256-byte key and an 768-byte value, which are varied in Figure~\ref{fig:group_3_1} (a) and (b), respectively. To ensure comprehensive evaluation, we also assess the influence of diverse block cache configurations (Figure~\ref{fig:group_3_1} (d)), data scales (Figure~\ref{fig:group_3_1} (c)), the efficacy of LSM-DR tree structure and EVE (~\ref{fig:group_4}), as well as the impact of range lookups (Table ~\ref{tab:range lookup}). In addition, we further include two practical benchmarks that are widely adopted in previous works~\cite{huynh2021endure,mo2023learning,mo2025grow}, RocksDB micro benchmark, db\_bench~\cite{dbbench} (Table ~\ref{tab:dbbench}), and YCSB~\cite{ycsb} (Table ~\ref{tab:ycsb}) with different key distributions to assess more aspects of these range delete methods.

\vspace{1mm}
\noindent{\bf Overall Comparison.}
We evaluate different range delete methods under three representative workloads: lookup heavy (90\% point lookups 10\% updates), balanced (50\% point lookups 50\% updates), and update heavy (10\% point lookups 90\% updates). To assess these methods with different amounts of range deletes, we vary the range delete ratio from 0\% to 10\%.
As Figure~\ref{fig:group_1_workload} presented, {\ourmethod} consistently delivers the highest throughput across diverse workloads. Notably, in the balanced workload, it outperforms the second-best method, {\rocksdb}, by up to 2.4×. In this workload, both lookup and range delete efficiency critically impact overall performance. This demonstrates {\ourmethod}'s ability to support efficient range deletes without sacrificing point lookup performance. By contrast, point-based methods incur significant overhead from range deletes, while {\rocksdb} suffers slower point lookups. Even in workloads, where baselines may partially mask their weaknesses, {\ourmethod} maintains superior performance, achieving up to $1.6\times$ higher throughput than {\scan} in lookup heavy workloads and $2.1\times$ higher than {\rocksdb} in update heavy workloads.

We further observe that baseline performance degrades noticeably as the range delete ratio increases. In the balanced workload, {\rocksdb}'s throughput drops by over 30\% when the range delete ratio increases from 2\% to 10\%, as more range tombstones are being checked during point lookups. Other methods also suffer from increased range deletes due to more tombstone insertions. In contrast, the efficient global index and effective EVE bring higher robustness for  {\ourmethod} towards varying range delete numbers.

These experiments indicate the outstanding range delete and point lookup performance of {\ourmethod} at a high level. We next provide a finer performance breakdown to validate this conclusion.

\vspace{1mm}
\noindent{\bf Exp 1: Point Lookup and Range Deletes.} 
We profile the performance of each method when the range delete ratio is 10\% and decompose the latency into point lookup, update, and range delete counterparts, as depicted in Figure~\ref{fig:group_1_workload}. Notably, the methods using point deletion suffer from significant range delete overhead due to the large number of tombstone insertions or costly additional validation processes, particularly in update heavy workloads. Moreover, these tombstones could also introduce higher update overheads as illustrated in Figures~\ref{fig:group_1_workload} (b).
{\rocksdb} exhibits much better range delete performance, while it still spends substantial time on lookups which is 3.4 times slower than {\ourmethod}. 
{\it In contrast, {\ourmethod} achieves outstanding range delete latency and competitive lookup performance simultaneously.}

\vspace{1mm}
\noindent{\bf Exp 2: Performance v.s. Range Length.} 
We vary the range delete length to evaluate the overall throughput, range delete latency, space amplification, and memory usage of different range delete methods under balanced workload. Figures~\ref{fig:group_2} (a) and (b) demonstrates that {\ourmethod} consistently delivers the most satisfying overall throughput and range delete latency across various cases, particularly for long ranges. Additionally, the point deletion based methods show explicit degradation as the range length grows, especially for {\decom} that necessitates obvious more tombstone insertions with higher overhead. In contrast, {\rocksdb} and {\ourmethod} exhibit more robust performance against the increase in range length due to their range records based strategies. Whereas, {\ourmethod} delivers better overall throughput than {\rocksdb} by up to $2.7\times$ when the range delete length is 1024. {\it Hence, {\ourmethod} maintains desirable and robust performance across varying range delete lengths.} 

Different range delete lengths lead to varied disk size which is reported in Figure~\ref{fig:group_2} (c). Generally, longer deletes remove more entries from the LSM-tree, resulting in reduced data volume. This applies to all methods. Meanwhile, their disk usage is comparable because all of them can effectively reclaim space occupied by obsolete entries via compaction. Although {\rocksdb} and {\ourmethod} maintain extra structures for range records, the associated space overhead is slight due to their compact size. 
\noindent{\it This indicates {\ourmethod} achieves sound space amplification and memory efficiency.}

In addition, Figure~\ref{fig:group_2} presents the memory breakdown with delete length of 128. Clearly, Bloom filters and indexes (B\&I) dominate memory usage across all methods. While LSM-tree buffer (WB) and block cache (BC) occupy smaller and fixed amounts for each method. In comparison, {\ourmethod} uses extra memory for global index write buffer (IDX) and entry validity estimator (EVE). Whereas their sizes are minimal compared to other parts.


\begin{figure*}[t]
 \vspace{-6mm}
 \centering
 \captionsetup{justification=raggedright, singlelinecheck=false}
  \includegraphics[width=\linewidth]{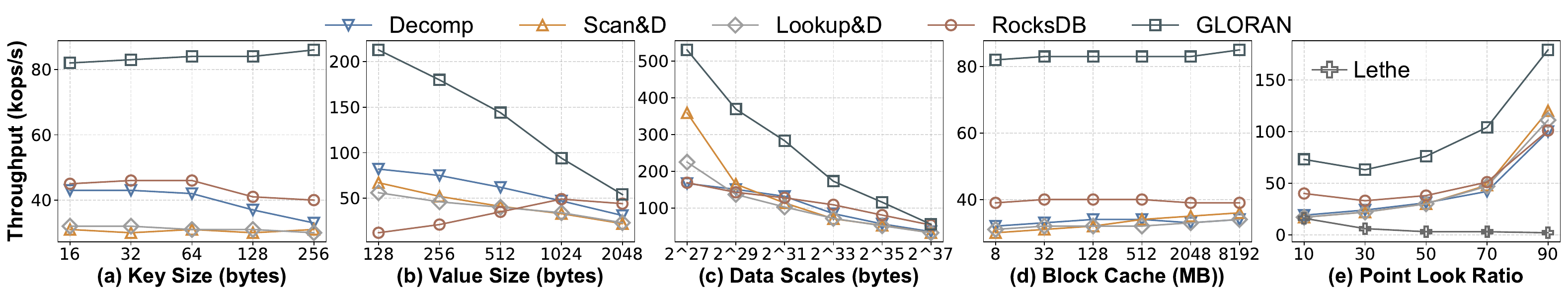} 
  \vspace{-7mm}
  \caption{Throughput across varying key sizes (a), value sizes (b), data scales (c), block cache sizes (d), and with Lethe (e).}
  \label{fig:group_3_1}
 \vspace{-3mm}
\end{figure*}

\begin{figure*}[t]
\vspace{-1mm}
  \centering
  \begin{minipage}[c]{0.37\textwidth}
  \captionsetup{singlelinecheck=true, font=small, justification=raggedright, margin={-2mm, -0mm}}
  \vspace{-4mm}
  \captionof{table}{Normalize throughput with range lookup.}
    \centering
    \vspace{-4mm}
\resizebox{0.99\linewidth}{!}{%
\begin{tabular}{|l|lllll|}
\hline
\multirow{2}{*}{} & \multicolumn{5}{c|}{Range Lookup Ratio (\%)} \\ \cline{2-6} 
    & \multicolumn{1}{c}{2} 
    & \multicolumn{1}{c}{4} 
    & \multicolumn{1}{c}{6} 
    & \multicolumn{1}{c}{8} 
    & \multicolumn{1}{c|}{10} \\ \hline
Decomp            
& $1.00\times$     
& $1.00\times$  
& $1.00\times$      
& $1.00\times$      
& $1.00\times$                           \\ \hline

Scan\&D           
& $1.02\times$     
& $1.11\times$     
& $1.03\times$     
& $1.04\times$     
& $1.01\times$                          \\ \hline

Look\&D         
& $1.05\times$     
& $1.13\times$     
& $1.00\times$      
& $1.06\times$     
& $1.00\times$                           \\ \hline

RocksDB 
& $1.35\times$     
& $1.37\times$     
& $1.23\times$    
& $1.23\times$     
& $1.17\times$                          \\ \hline

\textbf{GLORAN}   
& {\ul \textbf{$2.20\times$}} 
& {\ul \textbf{$2.04\times$}} 
& {\ul \textbf{$1.72\times$}} 
& {\ul \textbf{$1.62\times$}} 
& {\ul \textbf{$1.45\times$}}           \\ \hline
\end{tabular}
\label{tab:range lookup}
}
\end{minipage}
\hfill
\begin{minipage}[c]{0.62\textwidth}
    \centering
    \includegraphics[width=\textwidth]{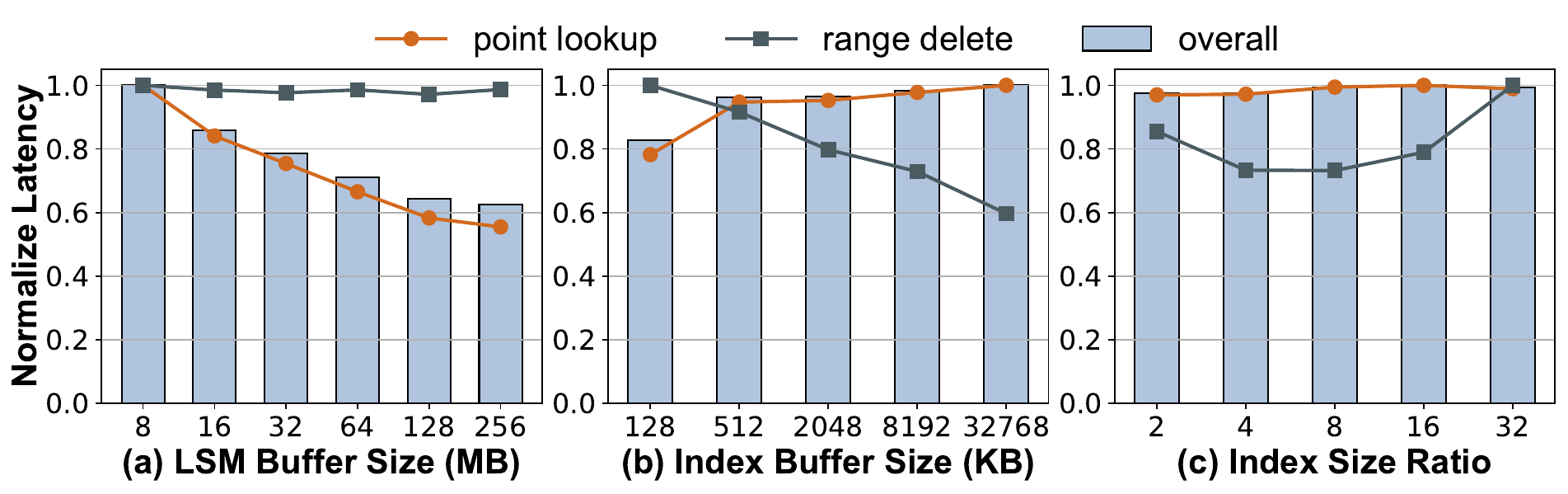}
  \end{minipage}
  \vspace{-4mm}
  \caption{The impact of range lookups (Table ~\ref{tab:range lookup}), LSM buffer size (a), as well as global index buffer size (b) and size ratio (c).}
  \label{fig:group_3_2}
\vspace{-5mm}
\end{figure*}

\vspace{1mm}
\noindent{\bf Exp 3: Performance v.s. Entry Sizes.}
We vary the key size and value size under balanced workload and show the results in Figure~\ref{fig:group_3_1} (a) and (b), respectively. In Figure~\ref{fig:group_3_1} (a), the key size is changed 
with entry size fixed at 1024 bytes. With larger key sizes is applied, the throughput of {\decom} decreases since larger tombstones are inserted. The performance of {\rocksdb} also drops for the more costly point lookups resulted by enlarged range tombstones. 
By contrast, the LSM-DRtree and EVE facilitate {\ourmethod} with efficient point lookups hence achieves stable performance that is up to $2.2\times$ improvement over {\rocksdb}.
Figure~\ref{fig:group_3_1} (b) varies value size
with key size fixed to 64 bytes. For all the methods, their throughput decreases as value size increases due to the increased compaction frequency and the associated overhead. Moreover, the enlarged value incurs deeper LSM-tree that requires higher point lookup cost. This eventually dominates the overall overhead and progressively mitigates the performance difference among diverse methods. Nevertheless, {\ourmethod} still maintains 20\% higher throughput than the best baseline at 2048 bytes.
Interestingly, the performance of {\rocksdb} is noticeably lower than others with small value size. In such cases, the point lookups cost on LSM-tree is relatively low, thus amplifying the impact incurred by checking range tombstones. As the results show, {\ourmethod} achieves $10.6\times$ faster point lookups than {\rocksdb} when value size is 256 bytes. This also explains why {\rocksdb} is outperformed by other baselines, though they delivers considerably less overall throughput than {\ourmethod}.

\vspace{1mm}
\noindent{\bf Exp 4: Performance v.s. Data Scales.}
We incorporate more operations in the balanced workloads to test the range delete methods under diverse data scales, which varies from $2^{27}$ bytes to $2^{37}$ bytes in Figure~\ref{fig:group_3_1} (c). With increased data volume, the LSM-tree develops more levels that brings higher update and lookup costs. Hence generally impact the throughput of all methods. Meanwhile, this incurs more overhead for tomsbtons insertion that impact the point deleteion based methods.  
The increased operation number also issues more range records. Hence the point lookup performance of {\rocksdb} is drastically affected according to Table~\ref{tab:costs}. While {\ourmethod} presents obviously improved point lookup complexity that makes the deterioration resulted by scalded data volume more moderate. As a result, though the increases data scale poses challenge for {\ourmethod}, it still presents impressive performance and adaptability.

\vspace{1mm}
\noindent{\bf Exp 5: Performance v.s. LSM Parameters.}
We vary the block cache size from 8 MB to 8192 MB to show the normalized throughput of different methods in Figure~\ref{fig:group_3_1}(d). The results indicate that enlarging the block cache typically cannot obviously enhance the performance since it does not impact the workflow of range delete relevant operations. Although {\lookup} and {\scan} gain minor improvements from faster lookups, the benefits are limited, and {\ourmethod} still achieves superior performance.
Figure~\ref{fig:group_3_2}(a) reports {\ourmethod}’s performance under different LSM memory buffer sizes. For clarity, we present both normalized overall latency and the breakdown for point lookups and range deletes. In this figure, larger buffers mainly improve point lookup efficiency by reducing the number of LSM levels, thus lowering overall latency. Meanwhile, the overheads of other operations remain relatively stable.

\vspace{1mm}
\noindent{\bf Exp 6: Performance v.s. Global Index Parameters.}
We evaluate various configurations of the global range record index by tuning the LSM-DRtree parameters. As shown in Figures~\ref{fig:group_3_2}(b), larger memory buffers reduce range delete overhead by lowering the LSM-DRtree depth and compaction frequency. However, they also increase the DR-tree depth at each level, resulting in higher point lookup costs. This contrasts with the LSM-tree in Figures~\ref{fig:group_3_2}(a), where sorted entries and fence pointers make lookup cost less sensitive to level size. Similarly, the range deletes also benefit from an increased size ratio in Figure~\ref{fig:group_3_2}(c). While the performance degrades when the ratio exceeds 8 due to the high cost of merging enlarged levels. These findings highlight the need to carefully tune LSM-DRtree parameters for different workloads and suggest opportunities for further optimization.

\begin{figure*}[t]
\vspace{-6mm}
  \centering
  \begin{minipage}[t]{0.65\textwidth}
    \vspace{0pt} 
    \centering
    \includegraphics[width=0.99\textwidth]{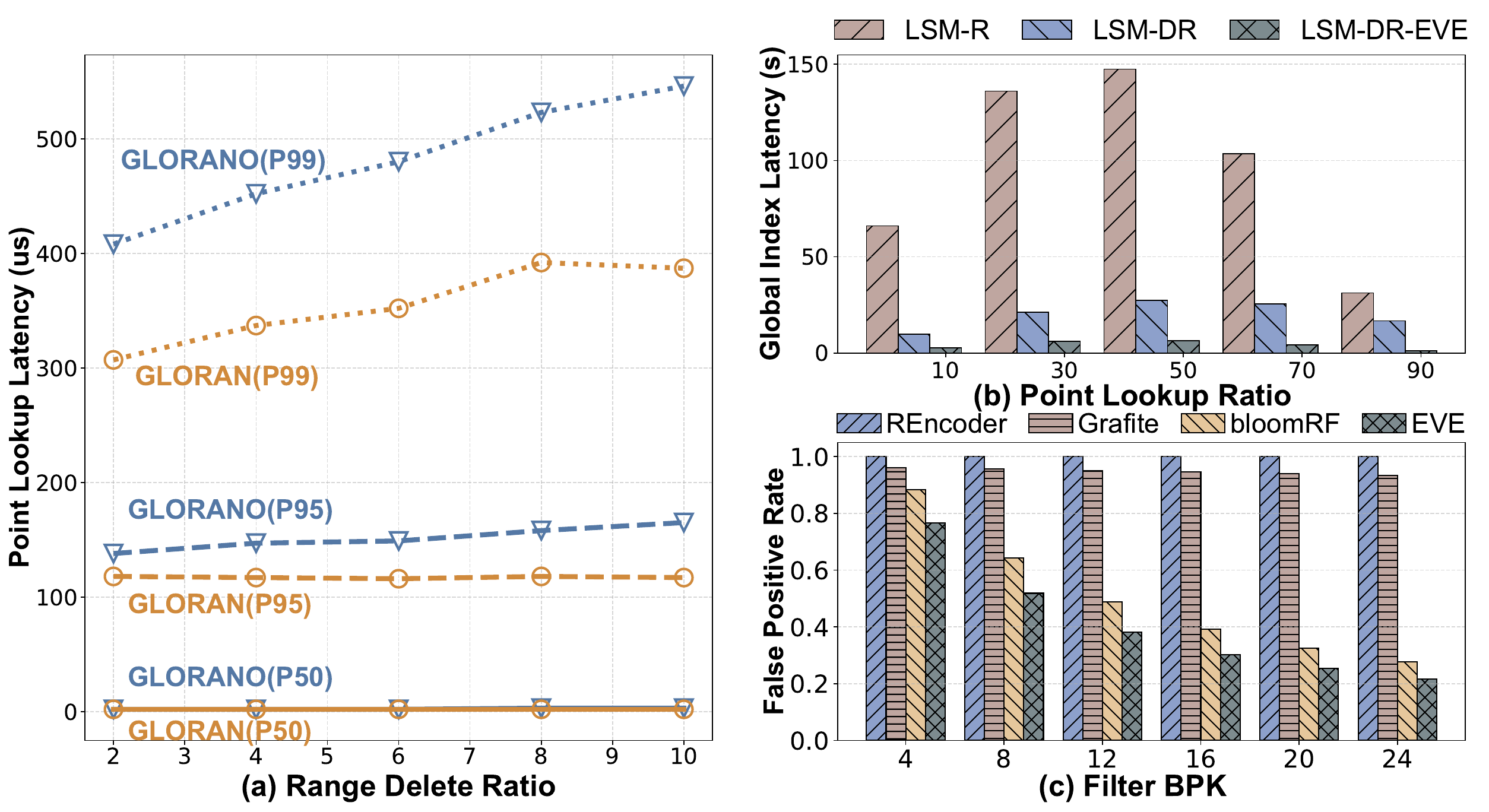}
  \end{minipage}
  \hfill
  \begin{minipage}[t]{0.34\textwidth}
    \vspace{0pt} 
    \captionsetup{singlelinecheck=false, font=small, justification=raggedright, margin={0mm, -0mm}}
    \captionof{table}{Normalize throughput under YCSB.}
    \centering
    \vspace{-4mm}
    \resizebox{0.98\linewidth}{!}{%
    \begin{tabular}{|l|llll|}
      \hline
      \multirow{2}{*}{} & \multicolumn{4}{c|}{Workload} \\ \cline{2-5}
      & Point-L & Balance & Update & Range-L \\
      \hline
      Decomp 
      & $1.01\times$ 
      & $1.00\times$ 
      & $1.29\times$ 
      & $1.00\times$ \\ \hline
      Scan\&D 
      & $1.02\times$ 
      & $1.16\times$ 
      & $1.03\times$ 
      & $1.17\times$ \\ \hline
      Look\&D 
      & $1.04\times$ 
      & $1.10\times$ 
      & $1.00\times$ 
      & $1.08\times$ \\ \hline
      RocksDB 
      & $1.00\times$ 
      & $1.11\times$ 
      & $1.63\times$ 
      & $1.08\times$ \\ \hline
      \textbf{GLORAN} 
      & {\ul \textbf{$1.29\times$}} 
      & {\ul \textbf{$2.11\times$}} 
      & {\ul \textbf{$3.87\times$}} 
      & {\ul \textbf{$1.54\times$}} \\ \hline
    \end{tabular}
    \label{tab:ycsb}
    }

    \vspace{3mm}
    \captionof{table}{Normalize throughput under db\_bench.}
    \vspace{-4mm}
    \centering
    \resizebox{0.98\linewidth}{!}{%
    \begin{tabular}{|l|lllll|}
      \hline
      \multirow{2}{*}{} & \multicolumn{5}{c|}{Point Lookup Ratio (\%)} \\ \cline{2-6}
      & 10 & 30 & 50 & 70 & 90 \\ \hline
      Decomp 
      & $1.04\times$ 
      & $1.06\times$ 
      & $1.04\times$ 
      & $1.00\times$ 
      & $1.12\times$ \\ \hline
      Scan\&D 
      & $1.01\times$ 
      & $1.00\times$ 
      & $1.00\times$ 
      & $1.14\times$ 
      & $1.22\times$ \\ \hline
      Look\&D 
      & $1.00\times$ 
      & $1.02\times$ 
      & $1.00\times$ 
      & $1.16\times$ 
      & $1.26\times$ \\ \hline
      RocksDB 
      & $3.81\times$ 
      & $2.56\times$ 
      & $1.99\times$ 
      & $1.66\times$ 
      & $1.00\times$ \\ \hline
      \textbf{GLORAN} & {\ul \textbf{$4.51\times$}} & {\ul \textbf{$3.08\times$}} & {\ul \textbf{$2.63\times$}} & {\ul \textbf{$2.54\times$}} & {\ul \textbf{$1.92\times$}} \\ \hline
    \end{tabular}
    \label{tab:dbbench}
    }
  \end{minipage}
  
  \vspace{-4mm}
  \caption{Results on additional benchmarks YCSB (Table~\ref{tab:ycsb}) and db\_bench (Table~\ref{tab:dbbench}). We further present point lookup latency (a) and global index query latency (b) to evaluate LSM-DRtree's efficacy, as well as compare EVE with other range filters (c).}
  \vspace{-4mm}
  \label{fig:group_4}
\end{figure*}


\vspace{1mm}
\noindent{\bf Exp 7: Range Deletes and Range Lookups.}
The range lookup operation retrieves all entries within a specified key range. To accomplish this, the system constructs iterators across all LSM levels and sequentially retrieves the matching keys. In {\rocksdb}, iterators should also be created for range tombstone blocks to determine key validity. Similarly, in {\ourmethod}, iterators are built for each DR-tree in the global index, as the effective areas are sorted and sequentially stored on disk.
In our experiments, we replace a portion of point lookups with randomly generated range lookups of length 100 and vary the proportion (range lookup ratio) from 0\% to 10\%. For ease of comparison, we normalize the throughput of all methods and report the results in Table~\ref{tab:range lookup}. Clearly, {\ourmethod} consistently outperforms all baselines, achieving more than 45\% higher throughput than the basic baseline. This demonstrates its sound capability in handling range lookup operations efficiently.


\vspace{1mm}
\noindent{\bf Exp 8: Performance of Lethe.}
Lethe~\cite{sarkar2020lethe} also discusses range deletion in LSM-tree. 
Whereas it targets other operation, {\it secondary range delete}, which is applied to secondary key. This is fundamentally different from our target. 
Nevertheless, we also evaluate its performance and compare with other methods. To this end, we adopt the code released by author~\cite{lethecode} and implement range deletes via the {\it deleterange} API. 
As Figure~\ref{fig:group_3_1} (e) illustrates, Lethe delivers relatively limited throughput that is consistently deteriorated as more point lookups are applied. This is not surprising because of the different design targets and techniques. To accelerate deletion on the secondary key, Lethe introduces KiWi, a new data layout tailored for its application.
However, this simultaneously complicates the workflow of other operations like point lookup, especially when range deletes are considered.

\vspace{1mm}
\noindent{\bf Exp 9: Micro Benchmark db\_bench.}
We further evaluate the range delete methods using the db\_bench microbenchmark provided in RocksDB~\cite{rocksdb} to generate test cases with different ratios of updates and point lookups.
We randomly replace 10\% of updates with range deletes, while varying the percentage of point lookups from 10\% to 90\%, and show the results in Table ~\ref{tab:dbbench}. Across all cases, {\ourmethod} achieves the best overall performance, reaching up to 1.5 times the throughput of the strongest baseline. Similar to Figure~\ref{fig:group_1_workload}, the range record-based methods like {\rocksdb} and {\ourmethod} are more competitive in update-heavy workloads. Nevertheless, {\ourmethod} still achieves 52\% higher throughput than the most performant point delete based method, {\lookup}, under lookup-heavy workload that only contains 10\% updates, indicating {\ourmethod}'s impressive performance of both range deletes and point lookups.

\vspace{1mm}
\noindent{\bf Exp 10: YCSB Benchmark and Zipfian Distribution.}
We introduce YCSB~\cite{ycsb}, another widely used benchmark to provide a more comprehensive assessment. We utilize four typical workloads under Zipfian distribution, {\it Point-L} contains 90\% point lookups and 10\% updates, {\it Balance} contains 50\% point lookups and 50\% updates, {\it Update} contains 10\% point lookups and 90\% updates, {\it Range-L} contains 20\% range lookups and 80\% updates. Since YCSB does not include range deletes, we randomly replace 10\% of updates with range deletes. According to Table~\ref{tab:ycsb}, {\ourmethod} delivers the most competitive performance across all four workloads. In particular, it achieves $3.9×\times$ improvements over the baseline in update-heavy workloads. It is observed that the enhancement in lookup-heavy is less significant. In Zipfian distribution, a certain part of keys are updated and queried more frequently. In this case, most queried keys are valid, thus mitigating the efficacy of EVE. Even though {\ourmethod} still apparently higher throughput than baseline by 1.3 and 1.5 times under  Point-L and Range-L workloads, respectively. These results confirm {\ourmethod}'s capability of handling various benchmarks and diverse access patterns.

\vspace{1mm}
\noindent{\bf Exp 11: Efficacy of LSM-DRtree and Entry Validity Estimator.}
Figure~\ref{fig:group_4} (a) compares the point lookup performance of {\ourmethod} with GLORANO, which utilizes LSM-Rtree as the global range record index. We prepare the database with 100M updates and replace different ratios of operations with range deletes. Then, we conduct 50M point lookups to collect the latencies and illustrate the median (P50) and tail (P95 and P99) values in the figure. Obviously, {\ourmethod} achieves lower tail latency under both percentiles. Specifically, its P99 latency is smaller than GLORANO by almost $1.5\times$ when 10M range deletes are included in the workload. Moreover, the performance of {\ourmethod} is more constant as the range delete ratio increases, while GLORANO presents more apparent rises in latency. Besides, both methods achieve a satisfying median performance. This indicates that {\ourmethod} can not only effectively mitigate point lookup overhead but also deliver more predictable and robust performance under diverse workloads. This owes to the two core designs of {\ourmethod}, the LSM-DRtree-based global index and range encoding filter, which are further assessed. 

Figure~\ref{fig:group_4} (b) exhibits the latency for querying the global index under three designs, global index using LSM-R and LSM-DR, and LSM-DRtree integrated with EVE (LSM-DR-REF). We include 100M operations to vary the point lookup ratio while keeping range delete ratio as 10\%. The results show that LSM-DR significantly outperforms LSM-R, which delivers up to $6.2\times$ quicker latency, by effectively reducing redundant range record access. With the integration of EVE, it further reduces global index latency by more than $3.5\times$ across all cases. This effectively evidences the efficacy the LSM-DRtree and entry validity estimator.

We further compare EVE with three competitive range filters that support dynamic updates: Grafite~\cite{costa2024grafite}, REncoder~\cite{wang2023rencoder}, and bloomRF~\cite{mossner2023bloomrf}. Figure~\ref{fig:group_4} (c) presents the their FPR when assigning diverse bits per range delete(BPK). Specifically, we insert 14M random ranges with a length of 100, followed by 1M random queries. The capacity of the first RAE is set to 2M for EVE. 
The results indicate that EVE consistently achieves the lowest false positive rate across all BPKs, especially under higher BPKs. This is under expectation since our estimator actually addresses the different problem of range filters, which highlights the capability of key range encoding instead of individual key points. Meanwhile, BloomRF also considers key ranges and coarsely encodes them with prefixes, thus performing better than other filters. However, our EVE still delivers more than 20\% better FPR across all memory budgets.

\vspace{-3mm}
\section{Conclusion}
This paper proposes a novel range delete method, {\ourmethod}, that efficiently manages the range records with a global index, which offers competitive and robust range delete performance as well as satisfying lookup performance. We further integrate it with an entry validity estimator to enhance it for better performance.

\bibliographystyle{ACM-Reference-Format}
\bibliography{arxiv}

\end{document}